\newcommand{\IS}{\mathrm{IS}}
\newcommand{\ISKeygen}{Keygen}
\newcommand{\ISProver}{P}
\newcommand{\ISCheck}{V}
\newcommand{\Init}{\ISKeygen}
\newcommand{\saltsize}{l_{\mathrm{salt}}}
\newcommand{\game}{\textsc{Game:}}
\newcommand{\commsize}{l_\mathrm{comm}}
\newcommand{\SKeygen}{\mathrm{Keygen}_S}
\newcommand{\subalign}[1]{%
	\vcenter{%
		\Let@ \restore@math@cr \default@tag
		\baselineskip\fontdimen10 \scriptfont\tw@
		\advance\baselineskip\fontdimen12 \scriptfont\tw@
		\lineskip\thr@@\fontdimen8 \scriptfont\thr@@
		\lineskiplimit\lineskip
		\ialign{\hfil$\m@th\scriptstyle##$&$\m@th\scriptstyle{}##$\hfil\crcr
			#1\crcr
		}%
	}%
}
\newcommand{\cv}{\vect{c}}
\renewcommand{\ev}{\vect{e}}
\newcommand{\sv}{\vect{s}}
\newcommand{\tv}{\vect{t}}
\newcommand{\vv}{\vect{v}}
\newcommand{\xv}{\vect{x}}
\newcommand{\yv}{\vect{y}}
\newcommand{\zv}{\vect{z}}
\newcommand{\Hm}{\vect{H}}
\pgfplotsset{compat=1.10}
\newcommand{\seed}{\mathrm{seed}}
\newcommand{\salt}{\emph{salt}}
\renewcommand{\comm}{comm}
\newcommand\wt[1]{\abs{\vect{#1}}}
\newcommand*{\eqdef}{\stackrel{\text{def}}{=}}
\newcommand{\SSign}{\mathrm{Sign}_S}
\newcommand{\SVerify}{\mathrm{Ver}_S}
\newcommand{\zo}{\{0,1\}}
\newcommand{\Unif}{\leftarrow}
\renewcommand{\ss}{\ensuremath{\mathrm{S}}}
\definecolor{coll}{HTML}{000090}
\newcommand{\SD}{\mathrm{SD}}
\newcommand{\PKP}{\mathrm{PKP}}
\newcommand{\Perm}{Perm}
\newcommand{\wins}{\mathrm{ wins }}
\renewcommand{\aa}{\mathscr{A}}
\newcommand{\bb}{\mathscr{B}}
\newcommand{\hh}{\mathcal{H}}
\renewcommand{\ev}{\mathbf{e}}
\newcommand{\seedsize}{l_\mathrm{seed}}
\newcommand{\COMMENT}[1]{}
\def\01{\{0,1\}}
\def\01{\{0,1\}}
\newcommand{\F}{\mathbb{F}}
\newcommand{\N}{\mathbb{N}}
\newcommand{\Fq}{\F_q}
\newcommand{\cadre}[1]
{
	\begin{tabular}{|p{\textwidth}|}
		\hline
		\vspace*{-0.3cm}
		#1 \\
		\hline
	\end{tabular}

}
\newlength{\plarg}
\newlength{\glarg}
\newcommand{\vect}[1]{\mathbf{#1}}
\renewcommand{\vec}[1]{\mathbf{#1}}
\newcounter{openbox}[section]\setcounter{openbox}{0}
\newenvironment{openbox}[2][]{%
    \refstepcounter{openbox}
 
    \ifstrempty{#1}%
    {\mdfsetup{%
        frametitle={%
            \tikz[baseline=(current bounding box.east),outer sep=0pt]
            \node[anchor=east,rectangle,fill=white]
            {};}
        }%
    }{\mdfsetup{%
        frametitle={%
            \tikz[baseline=(current bounding box.east),outer sep=0pt]
            \node[anchor=east,rectangle,fill=white]
            {~#1};}%
        }%
    }%
\mdfsetup{%
    innertopmargin=2pt,innerbottommargin=10pt,linecolor=black,%
    linewidth=0.5pt,topline=true,%
    frametitleaboveskip=\dimexpr-\ht\strutbox\relax%
}
 
\begin{mdframed}[]\relax}{%
\end{mdframed}}
\title{On the (In)security of optimized Stern-like signature schemes}
\author{André Chailloux\inst{1} \and Simona Etinski\inst{2} \\
{andre.chailloux@inria.fr}, {simona.etinski@cwi.nl}}
\institute{Inria de Paris COSMIQ team \and Centrum Wiskunde and Informatica Cryptology group}
\begin{document}
	\maketitle
	\begin{abstract}
		Stern's signature scheme is a historically important code-based signature scheme. A crucial optimization of this scheme is to generate pseudo-random vectors and permutation instead of random ones, and most proposals that are based on Stern's signature use this optimization. However, its security has not been properly analyzed, especially when we use deterministic commitments. In this article, we study the security of this optimization. We first show that for some parameters, there is an attack that exploits this optimization and breaks the scheme in time $O(2^{\frac{\lambda}{2}})$ while the claimed security is $\lambda$ bits. This impacts in particular the recent Quasy-cyclic Stern signature scheme~\cite{BGMS22}. Our second result shows that there is an efficient fix to this attack. By adding a string $\salt \in \zo^{2\lambda}$ to the scheme, and changing slightly how the pseudo-random strings are generated, we prove not only that our attack doesn't work but that for any attack, the scheme preserves $\lambda$ bits of security, and this fix increases the total signature size by only $2\lambda$ bits. We apply this construction to other optimizations on Stern's signature scheme, such as the use of Lee's metric or the use of hash trees, and we show how these optimizations improve the signature length of Stern's signature scheme.
	\end{abstract}
	
	\section{Introduction}
	
	As the NIST standardization process~\cite{Nist17} for post-quantum cryptography advances, the challenge of building efficient and secure post-quantum signature schemes becomes increasingly demanding. Though lattice-based schemes seem as the most promising candidates, their well-studied alternatives from code-based cryptography are gaining more attention recently. The lack of efficiency of code-based signature schemes, as their major drawback, is addressed in the work of many cryptographers, and different techniques that aim to circumvent this problem were proposed. For some of these techniques, however, it remained unclear if the schemes stayed secure after decreasing the signature size, i.e. if the security reduction still holds in that case.
	
	 In this article, we focus on the signature schemes derived from the identification protocol originally proposed by Jacques Stern ~\cite{Ste94}. Its security is based on the hardness of the syndrome decoding problem ~\cite{BMvT78}  and the Fiat-Shamir transform \cite{FS87}, which is well-analyzed with respect to classical and quantum adversaries. The main drawback of Stern's signature scheme in its original form, the same as of the other code-based schemes, is large signature size (order of a few hundreds kilobytes). Different variants of the scheme are thus proposed to mitigate this problem. Some of them, for example, reduce the signature sizes by reducing the cost of each round of the underlying identification protocol. Some versions use the cut and choose technique or even modify the underlying syndrome decoding problem (\cite{Ver97}, \cite{AGS11}, \cite{CVE11}, \cite{GRSZ14}, \cite{Beu20}, \cite{BBC+20}, \cite{BGMS22}, \cite{FJR21}, \cite{FJR22}).
	
	The goal of this article is to study several variants of optimized Stern's signature scheme and to assess their security. We first discuss here two features that appear in Stern's signature and other schemes that use the Fiat-Shamir transform.
	
	\paragraph{Random seeds for constructing pseudo-random strings.} In the original Stern's scheme, the signer must generate random vectors $\yv_i$ as well as a random permutations $\pi_i$. In the signature, some of these have to be revealed and for others, we only have the commitment to these values. Revealing all these values is very costly and makes the signature very long. The idea is to replace these random strings by pseudo-random strings. For example, if we want to construct a vector  $\yv \in  \Fq^n$, we use a function $E : \zo^{\seedsize} \rightarrow \Fq^n$, generate a random $\seed \in \zo^{\seedsize}$ and set $\yv = E(\seed)$. When one wants to reveal $\yv$, it is enough to reveal the $\seed$ which is much shorter.
	
	This idea has been used in several proposals but has never been properly proven and we don't even know to what extent this optimization remains secure. In order to maintain $\lambda$ bits of security\footnote{When we write maintain $\lambda$ bits of security, we mean that if the original scheme has $\lambda$ bits of security then the optimized scheme should also have $\lambda$ bits of security.} should we have $\seedsize = {2\lambda}$ in order to avoid any collision between the seeds, or can we take $\seedsize = {\lambda}$ which ensures that $\yv$ has $\lambda$ bits of randomness? As we will see, many designs use seeds of size $\lambda$ which seems at first sight to be enough.
	
	\paragraph{Using commitment from hash functions}
	The following textbook construction builds a commitment scheme from a hash function $\hh$: in order to commit to a string $s$, pick a random string $\rho \in \zo^{2\lambda}$ and send as a commitment $c = \hh(s,\rho) \in \zo^{2\lambda}$. In order to reveal $s$, one also reveals the random value $\rho$.  This commitment scheme has $\lambda$ bits of security against collision attacks, and also has a strong hiding property.  For example, in MQDSS~\cite{CHR+20} , we can read (where $k$ is the security parameter):
	\begin{quote}
		\emph{In particular the commitment functions now take an additional argument - a random string of length 2k. The reason for this change is that with this additional input it can be shown that the commitment function used in MQDSS is computationally hiding - a property needed to show the EU-CMA security of MQDSS.}
	\end{quote}
	
	The string they refer to is exactly the string $\rho$.
	However, in many proposer schemes, this randomization $\rho$ is omitted from the commitment scheme. Indeed, if one has to reveal a string $\rho$ of size $2\lambda$ for each opening, this can greatly increase the size of the signature scheme constructed from this commitment. We will call a commitment with $\rho$ a randomized commitment and a commitment without this $\rho$ a deterministic commitment. So what it the correct thing to do? Is it necessary to use randomized commitments or can we use deterministic commitments?

	Strikingly, each proposal that uses these $2$ ingredients doesn't give the same answers to these questions, some are quite conservative while others take the maximum amount of risk.  For instance
	\begin{itemize}
		\item \cite{BGMS22} in an optimized Stern's signature scheme, uses seeds of size $\lambda$ and deterministic commitments.
		\item \cite{BBC+20} relies on restricted syndrome decoding, uses seeds of size $2\lambda$ and deterministic commitments. 
		\item As we said,  the MQDSS signature scheme uses seeds of size $\lambda$ bits and commitments of size $2\lambda$. However, the commitments are probabilistic so our attack doesn't work but this is makes the scheme quite inefficient. 
	\end{itemize}
	
	\subsection{Contributions}
	
	Our contributions can be divided in $3$ parts:
	\begin{enumerate}
		\item First we show that one has indeed to be careful with some optimizations. In Stern's signature scheme, if one adds seeds of size $\lambda$ and deterministic commitments, then we only achieve $2^{\lambda/2}$ bits of security instead of $\lambda$ bits. We show an attack in $O(2^{\lambda/2})$  that recovers the secret key in this setting. This directly attacks the scheme presented in~\cite{BGMS22}.
		\item We present the $\salt$ + index construction for commitments in order to circumvent the attack by adding a minimal amount of randomness in the commitment scheme. We show that with this addition, Stern's signature scheme has $\lambda$ bits of security and the cost in terms of signature size is minimal (a total increase of $2\lambda$ bits to be precise).
		\item We look at other optimizations of Stern's signature scheme: for example by using Lee's metric or by adding Merkle trees, how to improve even more the size of Stern's signature scheme.
	\end{enumerate}
	
	 We now elaborate each of these contributions.
	
	\subsubsection{Showing the insecurity of some optimized Stern's signature schemes.}~\label{Section:IntroCommitments}

	Our first contribution is the following
	
	\begin{theorem}
		If we use seeds of size $\lambda$ and deterministic commitments in Stern's signature scheme, then there is an attack that recovers the secret key in time $O(2^{\lambda/2})$.
	\end{theorem}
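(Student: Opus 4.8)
The plan is to exploit the fact that, once the per-round randomness is produced from a single seed of size $\lambda$ and the commitments are deterministic, each round of the protocol is completely described by $\lambda$ bits of entropy, so a birthday collision among $O(2^{\lambda/2})$ rounds can be \emph{detected} from the public transcript and then \emph{opened} in two incompatible ways to leak the secret. First I would recall the structure of a single round of Stern's protocol: the signer samples $\yv\in\Fq^n$ and a permutation $\pi$ (here both derived from one seed $\seed\in\zo^{\seedsize}$, $\seedsize=\lambda$) and publishes the three deterministic commitments $c_1=\hh(\pi,\Hm\yv^{T})$, $c_2=\hh(\pi(\yv))$, $c_3=\hh(\pi(\yv\oplus\ev))$, where $\ev$ is the fixed secret. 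Depending on the challenge $b\in\{0,1,2\}$ the response reveals respectively $\seed$ (hence $\yv,\pi$), the pair $(\pi,\yv\oplus\ev)$, or the pair $(\pi(\yv),\pi(\ev))$. The key observation is that from \emph{any} of these responses, together with the commitment transmitted in the signature, the attacker can reconstruct $c_1$ for that round; thus $c_1$ is available for every round and is a deterministic function of the $\lambda$-bit seed alone.

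Next I would run a birthday search over the collected rounds. Gathering $O(2^{\lambda/2})$ rounds from honestly generated signatures, and since each $c_1$ is $\hh$ evaluated on a value determined by a $\lambda$-bit seed, the birthday bound guarantees that two rounds use the same seed; I would locate such a pair in time $O(2^{\lambda/2})$ by sorting or hashing the collected $c_1$ values. This is precisely where the two hypotheses of the theorem are used: determinism makes the seed collision visible in the public commitment (a randomized commitment $\hh(\cdot,\rho)$ with fresh $\rho$ would mask it), and the seed length $\lambda$ makes the collision occur already at $2^{\lambda/2}$ samples.

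It then remains to turn a seed collision into the secret. The two colliding rounds share the same $\yv$ and $\pi$ but, because the Fiat--Shamir challenges are computed from distinct messages, they are generically answered with different challenges. I would do a short case analysis: if either round is opened with $b=0$ the attacker learns $\yv$ and $\pi$, and then the other round's opening recovers $\ev$ immediately, as $\ev=(\yv\oplus\ev)\oplus\yv$ when $b=1$ and $\ev=\pi^{-1}(\pi(\ev))$ when $b=2$; a $(b=1,b=2)$ pair works as well, since $b=1$ exposes $\pi$ and $b=2$ exposes $\pi(\ev)$. As a constant fraction of challenge pairs is of a usable type, a handful of collisions suffice, and each candidate $\ev$ is confirmed against the public key by checking $\Hm\ev^{T}=\sv$ together with its weight.

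The hard part will be the bookkeeping around the collision rather than the cryptographic idea. Concretely I must argue that a collision in $c_1$ is genuinely a collision of seeds: this uses collision resistance of $\hh$ (output length $2\lambda$) to exclude spurious hash collisions, and the injectivity of $\seed\mapsto(\pi,\Hm\yv^{T})$ up to negligible loss to exclude two distinct seeds producing the same first commitment, so that the two rounds really do share the same $\yv$ and $\pi$. I also need to quantify how many rounds guarantee, with constant probability, at least one collision whose challenge pair is of a usable type, and to bound the cost of the candidate-verification step; all of these stay within the claimed $O(2^{\lambda/2})$ budget, and the remainder is routine counting.
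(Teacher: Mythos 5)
Your proposal is correct and is essentially the paper's own attack: both harvest $O(2^{\lambda/2})$ rounds via signature queries, exploit the determinism of the commitments to detect a birthday collision among the $\lambda$-bit seeds, and then combine the openings of the two colliding rounds (which carry distinct challenges with constant probability) to extract $\ev$ exactly as a special-soundness extractor would, with the same bookkeeping about spurious hash collisions and final verification against $\Hm\ev = \sv$. The only cosmetic difference is the collision target: you collide on $x_1 = \hh(\pi,\Hm\yv)$, which is reconstructible for every round and makes any pair of unequal challenges usable, whereas the paper collides on $x_2 = \hh(\pi(\yv))$, which depends only on the sub-seed $\seed_{\yv}$, and uses the specific pattern where one colliding round has challenge $2$ and the other does not; both variants give the same $O(2^{\lambda/2})$ complexity.
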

	
	In other words, the choice of~\cite{BGMS22} for instance gives only $\lambda/2$ bits of security and not $\lambda$ as claimed. We present here the main idea behind our attack, which is quite simple. Assume you construct a pseudo-random string $\yv$ using a random$\seed$ st. $E(\seed) = \yv$ and that you commit to $\yv$ using a deterministic commitment $c = \hh(\yv)$. If you perform such a commitment many times, which happens when you perform many calls to the signing oracle, then you will arrive at a collision wp. $O(2^{\lambda/2})$ (at least each time there is a collision in the seeds) and an adversary can exploit such a collision. Indeed, if for one signature, this commitment is opened and for another signature it isn't opened, the adversary will be able to guess the committed value for this second signature which will break the scheme.
	
	\subsubsection{Mitigating the attack: the \salt + index construction}
	
	\paragraph{On multi-HVZK advantage.}
	Before we discuss the way to circumvent the above attack, let's discuss on how come the attack we present exists? Stern's signature scheme is constructed by applying the Fiat-Shamir transform to an identification scheme and this transformation is supposed to be secure. An identification scheme is required to be honest-verifier zero-knowledge and the identification scheme used for instance by~\cite{BGMS22} is HVZK. So where is the issue? Actually, if one wants to prove that a signature scheme constructed from an identification scheme is EUF-CMA, we  require the identification scheme to be multi-HVZK meaning that it is honest-verifier zero-knowledge even when seeing many transcripts arising from the identification scheme. Normally, this is not a big issue and the HVZK property will imply the multi-HVZK but this needn't be always the case. In \cite{GHHM21}, where the authors argue about the security of the Fiat-Shamir transform, they write
	
	\begin{quote}
		\emph{``In our security proofs, we will have to argue that collections of honestly generated transcripts are indistinguishable from collections of simulated ones. Since it is not always clear whether computational HVZK implies computational HVZK for multiple transcripts, we extend our definition, accordingly: In the multi-HVZK game, the adversary obtains a collection of transcripts (rather than a single one)."}
	\end{quote}
	
	Here, we are exactly in a setting where this subtlety matters. If we consider one call of Stern's optimized identification scheme, then one can construct a simulator st. any adversary that wants to distinguish the real transcript from a simulated transcript needs average time $O(2^{\lambda})$. However, there is an adversary that runs in time $O(2^{\lambda/2})$ which can recover the secret key with overwhelming probability if he is given $\Omega(2^{\lambda/2})$ independent transcripts. In order words, the $t$-HVZK property breaks for $t = 2^{\lambda/2}$. 
	
	\paragraph{The salt + index construction.}
	The above attack can be circumvented if we use a different hash function $\hh$ each time we perform a commitment. This is exactly the role of the string $\rho$ in the randomized commitment scheme. In theory, we could have a counter $i$ and use as a commitment $\hh(s,i)$ the $i^{th}$ time we perform a commitment. However, since we run the signing oracle several times, we would need to remember this counter between different signatures and that would make the scheme stateful. There ie however an idea along these lines that will work: 
	\begin{itemize}
		\item Each time you sign, pick a random string $\salt \in \zo^{2\lambda}$ that will act as a randomization string. What is important is that this string is the same for all the commitments used in the signature scheme so we will have to reveal only one such string. 
		\item \emph{Within each signature:} add a counter (which we call index) to ensure that within the same call to the signing oracle, we don't call the same function twice.
	\end{itemize}
	
	Since $\salt$ is the same throughout the signature, the whole signature is increased only by $\saltsize = 2\lambda$ bits. One can see that our attack will not work. Even if there are collisions in the seeds, the hashed values will be with different salts or different indices, and this negates our attack. Moreover, we show that the resulting scheme is mutli-HVZK, which ensures that no other attacks of this type exist.
	
	\begin{theorem}
		The optimized Stern's signature scheme with $\salt$ + index has
		$$ Adv^{t-HVZK}(\aa) = Adv^{SD}(|\aa|) + O(\frac{t^2}{2^{\saltsize}}) + O(\frac{t}{2^{\seedsize}}).$$
	\end{theorem}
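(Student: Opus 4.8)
The plan is to prove the bound by a hybrid argument that interpolates between the distribution of $t$ honestly generated transcripts and the distribution of $t$ simulated ones, working in the random oracle model for both the commitment hash $\hh$ and the seed-expansion function $E$. I would set up a chain of games $\textsc{G}_0, \textsc{G}_1, \dots$, where $\textsc{G}_0$ outputs the $t$ real transcripts produced by the honest prover from the secret key and the final game outputs $t$ transcripts produced by the single-transcript Stern simulator; each successive step is bounded separately, and the triangle inequality then assembles the total distinguishing advantage as the sum of the per-step losses.

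The first step isolates the salt collisions. Each of the $t$ signatures samples a fresh $\salt \in \zo^{\saltsize}$, so by the birthday bound all salts are pairwise distinct except with probability $O(t^2 / 2^{\saltsize})$; I would condition on this event at the cost of the first error term. Conditioned on distinct salts, every commitment appearing anywhere in the whole collection has an input prefix $(\salt, i)$ that is globally unique -- the salt separates transcripts and the index separates commitments within a single transcript -- so each commitment is an independent, fresh slice of the random oracle. This is the crucial point, and I expect it to be the main obstacle: it is exactly the step where the $\salt$ + index construction buys the passage from single-transcript to multi-transcript hiding. The subtlety flagged earlier (that HVZK need not imply multi-HVZK) manifests here as the danger of cross-transcript correlation, and the argument must establish that the only way such correlation can ever be exploited is through a salt collision, an event already charged to the $O(t^2/2^{\saltsize})$ term.

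With distinct salts in hand, I would then handle the unopened commitment and the unrevealed seed inside each transcript. For each challenge, exactly one commitment is never opened; because its random-oracle input is globally unique and never re-queried by the honest prover, its value is uniform and independent of the adversary's view unless the adversary itself queries that input, so I can replace it by a uniformly random string. Likewise, the hidden part of each transcript is an expansion $E(\seed)$ of a seed that is never disclosed, so $E(\seed)$ is an independent uniform value unless the adversary guesses $\seed$, an event of probability $O(1/2^{\seedsize})$ per transcript and hence $O(t / 2^{\seedsize})$ after a union bound over the $t$ transcripts. This accounts for the third error term and rewrites each transcript into a form that no longer depends on the pseudo-randomness of $E$ nor on the specific committed values.

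Finally, once the unopened commitments and the pseudo-random strings have been replaced by fresh randomness, each transcript coincides with the output of the standard single-instance Stern simulator, and the only remaining dependence on the real witness is through the public instance that is shared by all $t$ transcripts. I would close the argument by invoking the single-transcript indistinguishability of the base identification scheme, whose distinguishing advantage reduces to solving the syndrome decoding instance defined by the public key; since this instance is common to all $t$ transcripts, the reduction is invoked once rather than $t$ times, which is precisely why the bound carries $Adv^{SD}(|\aa|)$ with no factor of $t$. Collecting the three losses -- $Adv^{SD}(|\aa|)$ from the instance step, $O(t^2/2^{\saltsize})$ from salt collisions, and $O(t/2^{\seedsize})$ from seed guessing -- yields the claimed inequality.
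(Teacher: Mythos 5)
Your hybrid structure (salt separation via a birthday bound, then replacing the unopened commitment and the seed expansions by fresh randomness, then concluding with the single-round simulator) matches the paper's game sequence, and your salt and seed steps are essentially the paper's. The genuine gap is in how you justify the $Adv^{SD}$ term. In your third step you correctly observe that the unopened commitment $\hh(z_{c},\salt,index)$ is uniform \emph{unless the adversary queries that exact input}, but you never bound the probability of that query event; instead you defer it to a final step that invokes ``the single-transcript indistinguishability of the base identification scheme, whose distinguishing advantage reduces to solving the syndrome decoding instance.'' That justification fails: single-transcript HVZK of Stern's scheme (Proposition~\ref{Proposition:1roundHVZK} in the paper) reduces to the \emph{hiding} property of the commitment, not to SD, and with deterministic commitments there is no hiding property to invoke --- indeed, once the unopened commitments have been replaced by random strings, the transcripts already coincide with the simulator's output, so there is nothing left for a final indistinguishability step to absorb. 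The SD cost must be paid exactly at your third step, and your proposal contains no mechanism for paying it.

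The paper's mechanism, which is the heart of its proof (the transition from $\game 1$ to $\game 2$ in Theorem~\ref{Theorem:4}), is an extraction argument: for each round the adversary already possesses the two opened strings $z_{c'}$ and $z_{c''}$, and if it ever queries the third string $z_{c}$ to the random oracle, then the three strings together yield the secret key in polynomial time (this is the $3$-special-soundness extractor, the same computation used in the key-recovery attack of Section~\ref{Section:Attack}). One therefore builds an SD solver $\bb$ that runs $\aa$, monitors its oracle queries, and applies the extractor to each query combined with the published openings; this gives $\Pr[\exists i,j \mbox{ st. } \aa \mbox{ queries } z^{ij}_{c^{ij}}] \le Adv^{SD}(\bb)$ with $|\bb| \approx |\aa|$, invoked once across all $t$ transcripts. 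Without this step, the query event in your chain of hybrids is left uncharged and the claimed bound does not follow. (A minor additional imprecision: the seed-guessing loss should be charged per oracle query to $E$, not per transcript --- the paper obtains $O(q/2^{\seedsize})$ for a $q$-query adversary, since each query to $E$ with the correct salt and index is one guess at the hidden seed --- though this does not change the shape of the informal statement.)
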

	
	The way to read this theorem is the following: it is essentially as hard to break the $t$-HVZK property than to break the syndrome decoding problem (which is the hard computational problem underlying the whole scheme), as long as $\saltsize \ge t^2$ and $\seedsize \ge t$. $t$ here actually corresponds to the number of calls allowed to the signing oracle, which is often taken to be $2^\lambda$. This motivates the choice $l_\salt = 2\lambda$ and $l_\seed = \lambda$.  Using existing results on the security of the Fiat-Shamir transform, we have the following (informal) corollary.
	
	\begin{corollary}
		Stern's optimized signature scheme with salt + index is EUF-CMA secure
	\end{corollary}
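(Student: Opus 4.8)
The plan is to combine Theorem 2 (the multi-HVZK bound) with a generic security theorem for the Fiat-Shamir transform, such as the one of~\cite{GHHM21}, which asserts that a signature scheme obtained by applying Fiat-Shamir to an identification scheme is EUF-CMA secure provided the identification scheme is (i) multi-HVZK and (ii) satisfies a suitable soundness / knowledge-extraction property, with the hash function modeled as a (quantum) random oracle. Theorem 2 supplies ingredient (i) for the optimized Stern scheme with \salt{} + index, so the work reduces to establishing ingredient (ii) and then assembling the additive losses into a single bound in terms of $Adv^{SD}$.

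The argument proceeds in two stages. First, I would use the multi-HVZK property to replace the signing oracle. In the EUF-CMA game the adversary may make up to $t$ signing queries, and each honest answer depends on the secret key. By Theorem 2, a simulator that produces $t$ transcripts \emph{without} the secret key is indistinguishable from the real signing oracle up to the additive term $Adv^{SD}(|\aa|) + O(\frac{t^2}{2^{\saltsize}}) + O(\frac{t}{2^{\seedsize}})$. Switching to this simulator alters the adversary's forging probability by at most this quantity and removes every dependence of the signing oracle on the secret. This is precisely the step that the \salt{} + index construction was designed to make go through, since it is what closes the multi-transcript gap our attack exploited.

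Second, from any adversary that still forges in the simulated game I would extract a solution to the underlying syndrome decoding instance, thereby bounding the residual forging probability by $Adv^{SD}$. This rests on the special soundness of Stern's protocol: from sufficiently many accepting transcripts that share a commitment but carry distinct challenges, one recovers the secret low-weight vector. In the (Q)ROM this extraction is exactly what the Fiat-Shamir theorem of~\cite{GHHM21} packages (via online / measure-and-reprogram extractability), so I would invoke it as a black box rather than rewinding or forking by hand, and then combine its loss with the simulation loss from the first stage.

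I expect the main obstacle to be the soundness / extraction step rather than the simulation step. Stern's basic protocol has soundness error $2/3$ per round, so the scheme uses many parallel repetitions, and the extractor must obtain colliding challenges across enough rounds simultaneously; tracking the resulting polynomial loss factors — and, in the quantum setting, the extraction overhead together with the $O(q^2)$ query-dependent random-oracle terms — is where the care lies. Once these are controlled, plugging in $\saltsize = 2\lambda$ and $\seedsize = \lambda$ makes the multi-HVZK term dominated by $Adv^{SD}$ for $t$ up to $2^{\lambda}$, so the EUF-CMA advantage is negligible whenever syndrome decoding is hard, which is the claimed security.
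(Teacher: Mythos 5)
Your proposal matches the paper's own route: the paper proves the multi-HVZK bound (its Theorem on $Adv^{t-HVZK}$ with the \salt{} + index construction), notes that the soundness of Stern's scheme is the standard, immediate ingredient (via $3$-special soundness and parallel repetition), and then invokes the Fiat-Shamir security theorem of~\cite{GHHM21} as a black box to conclude EUF-CMA security, exactly as you do. The only cosmetic difference is one of emphasis — the paper treats soundness as immediate and views the $t$-HVZK bound as the delicate part, whereas you flag the extraction step as the main obstacle — but the proof structure is the same.
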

	
	Let us discuss the technical difficulties we had to overcome in order to prove our theorem. It seems at first strange that the advantage of solving the syndrome decoding problem appears in the statement of our theorem but we will make this clear. We want to construct a simulator that will output transcripts computationally indistinguishable from real transcripts. In the identification scheme, the prover commits to some strings and reveals some of them. How should we simulate the committed values $comm = \hh(s,\salt,index)$? The natural solution is to simulate this with a random string. Indeed,  we model $\hh$ as a random function in the random oracle model so the commitment is a random string. This reasoning is not entirely correct. Since $\salt$ and index are publicly known ($\salt$ is revealed most often by another opening) it suffices for an adversary to guess $s$ in order to distinguish $c$ from a random string. This shows the difficulty of such zero-knowledge proofs with deterministic commitment, even without pseudo-random strings and seeds. As we mentioned earlier, if we use probabilistic commitments, this problem disappears and we can show that the commitment is statistically indistinguishable from a random string. 
	
	Going back to our case, we said that $comm = \hh(z,\salt,index)$ can be distinguished from a random string if we can guess $z$. This will actually be the only way to distinguish $comm$ from a random string, even in the t-HVZK setting, as long as each $\salt$ is different (which happens wp. $O(\frac{t^2}{2^{\saltsize}})$). Stern's identification scheme consists of many repetitions of an identification scheme where the prover commits to $3$ strings $z_1,z_2,z_3$ and reveals $2$ out of the $3$ depending on the verifier's challenge. In this context, guessing the string $z$ means guessing the string that has not been revealed. But we show that if the adversary can guess this string, then he can recover the secret key, meaning he can solve the underlying syndrome decoding problem. Regarding seeds, we show that if we have salts, the only way to distinguish the random strings from the pseudo-random strings is to guess a seed value, which happens wp. $2^{-\seedsize} = 2^{-\lambda}$ so with the $\salt$ + index construction, seeds of size $\lambda$ suffice. 
	\subsubsection{Adding other optimizations}
	
	Now that we have theoretic foundations that show the security of our optimized Stern's signature scheme. We propose some instantiations where we also add optimizations. 
	
	\paragraph{Changing the metric.} Stern's identification scheme was originally constructed based on the hardness of the binary syndrome decoding problem but actually, Shamir already proposed a similar scheme based on the permuted kernel problem, which can be seen as a generalization of the binary syndrome decoding. Several variants of this problem: ternary syndrome decoding with large weight \cite{DST19}, restricted syndrome decoding, syndrome decoding with Lee's weight \cite{CDE21,WKHBSP22,HTW21} have been proposed and studied. In \cite{CDE21}, the current authors and Thomas Debris-Alazard presented a cryptanalysis of syndrome decoding using Lee's metric with the idea of proposing Stern's signature scheme 
	Similar schemes but with different optimizations) have also been proposed. Here, we show that Stern's optimized scheme with Lee's metric gives better results than Stern's original scheme.
	
	\paragraph{Hash trees.} Once we allow ourselves seeds, there are actually many ways to combine seeds. Here, we show an optimization which is a good compromise between performance and space gain: we regroup $4$ rounds of the identification scheme together and we construct the seeds and commitments according to a Merkle tree. We show that this can also lead to some gains. 
	
\COMMENT{	\subsection{Organization}
	 The article is organized as follows. In the first section, we introduce the fundamental problems upon which the signature schemes are based, namely, syndrome decoding problem and permuted kernel problem. We then show that finding the solution of the one implies finding the solution of the other by at most polynomial overhead in time. The next section explains the construction of the original Stern's identification protocol and possible optimization of it that reduces the communication cost of the protocol. We then introduce the Stern's signature scheme, as well as techniques for reducing its size. In the same section, we present the attack that exploit this techniques, and finally the method for correcting the impaired security. The section that follows introduces more advanced techniques for reducing signature sizes, and compares the resulted signature sizes with the one of the original scheme. In the final section, we summarize the main results and conclude the article with the open problems and future directions.
}

	\section{Hard Problems in Code-Based Cryptography}
	
	In this work, we focus on two hard problems in code-based cryptography: the Syndrome Decoding (SD) problem and the Permuted Kernel Problem(PKP). The two  have many properties in common, namely, both of them are known to be NP-complete~\cite{BMvT78,GJ90} and they are believed to be hard on average once the instance of the problem is taken from the appropriate distribution.

	\subsection{Syndrome Decoding (SD) problem}
	
	The canonical hard problem used in code-based cryptography is the syndrome decoding problem, defined as follows.
	
	\begin{problem}[Syndrome Decoding, $\SD(n,k,w)$]\label{Definition:SyndromeDecodignProblem}
		
		\textbf{Input:} A matrix $\Hm \in \F_q^{(n-k)\times n}, \ n, k \in \N$, a column vector (the syndrome) $\vect{s} \in \F_q^{n-k}$, a weight function $wt(\cdot): \F_q^n \rightarrow \N$, and a weight $w \in \N$. 
		
		\textbf{Goal:} Find a column vector $\ev \in \F_q^n$ that satisfies $\Hm\ev  = \sv$ and $wt(\ev)= w$.
	\end{problem}

	 We call $q$ the alphabet size of the problem. The original version of the problem, which we refer to as the binary syndrome decoding problem, corresponds to the case where $q = 2$ and the weight function is Hamming weight, defined as: $wt_H(\ev) = |\{i : e_i \neq 0\}|$. The decision version of this problem, asking whether there exists a vector $\ev$ of Hamming weight $w$ satisfying $\Hm \ev = \sv$, is proven to be $\NP$ complete\cite{BMvT78}. This is problem is also believed to be hard on average, as summarized by the following conjecture. 
	
	\begin{conjecture}[Average-case hardness of syndrome decoding]\label{Conjecture:AvgHardnessBSD}
		
		For suitable choices of $R = \frac{k}{n} = \Theta(1)$ and $\omega = \frac{w}{n} = \Theta(1)$, $SD(n,k,w)$ defined over the Hamming weight $wt(\cdot)$ is hard on average when the input $(\Hm,\sv)$ is sampled from  the distribution $\mathcal{D^1}_{n,k,w}$ given as:
		\begin{align}\label{Equation:AverageDistribution}
			\mathcal{D^1}_{n,k,w} :\Hm \xleftarrow[]{\$} \F_q^{(n-k)\times n}, \ev \xleftarrow[]{\$} S_w, \ \textrm{return } (\Hm, \sv = \Hm \ev).
		\end{align}
	where $S_w$ is the set of vectors of weight $w$.
	\end{conjecture}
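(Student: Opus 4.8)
The plan is to begin by acknowledging what kind of statement Conjecture~\ref{Conjecture:AvgHardnessBSD} is: it is an average-case hardness assumption, not a proposition that admits an unconditional proof. Establishing it outright would separate $\mathsf{P}$ from $\NP$ and, more strongly, would resolve a long-standing question in average-case complexity. So rather than a derivation from first principles, what I would actually assemble is a justification anchored in the worst-case result already available in the excerpt. First I would recall that the decision version of $\SD(n,k,w)$ over the Hamming metric is $\NP$-complete~\cite{BMvT78,GJ90}, which forbids a polynomial-time algorithm for arbitrary instances under $\mathsf{P} \neq \NP$. This is the anchor, but on its own it says nothing about the particular distribution $\mathcal{D}^1_{n,k,w}$, so the work is to argue that drawing from $\mathcal{D}^1_{n,k,w}$ does not fall into an atypically easy subclass.

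The next step would be to check well-posedness and then survey the best known attacks. For $R=k/n$ and $\omega=w/n$ constant, a uniformly random $\Hm$ is full rank except with probability $O(q^{-(n-k)})$, so $\sv=\Hm\ev$ is a genuine syndrome and the sampled instance is meaningful; in the Gilbert--Varshamov regime I would argue that with high probability the planted $\ev$ is the unique weight-$w$ preimage of $\sv$, so the search problem cannot be trivialized by a flood of spurious low-weight solutions. I would then invoke the random self-reducibility of decoding random linear codes to argue that no noticeable fraction of instances under $\mathcal{D}^1_{n,k,w}$ can be easy unless essentially all of them are. The quantitative heart of the justification is then a survey of the information-set-decoding family (Prange, Stern, Dumer, MMT, BJMM): for constant $R$ and $\omega$ their exponents stay bounded away from zero, and since $\mathcal{D}^1_{n,k,w}$ produces a structureless random parity-check matrix, none of the structure-exploiting speedups (quasi-cyclicity, low density, and so on) apply. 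It is precisely this survey that underpins the concrete $Adv^{SD}$ quantity used elsewhere in the paper.

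The main obstacle is one that no amount of cryptanalysis can remove: there is no known worst-case to average-case reduction for syndrome decoding at constant rate and constant relative weight, so the passage from $\NP$-completeness of the worst case to hardness of $\mathcal{D}^1_{n,k,w}$ cannot be made rigorous. Consequently the statement must remain a conjecture, and the honest ``proof'' is the combination of the $\NP$-completeness anchor, the well-posedness and self-reducibility arguments that show $\mathcal{D}^1_{n,k,w}$ is representative, and the exponential lower bounds against every known algorithm.
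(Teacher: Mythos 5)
Your top-level reading is correct, and it matches the paper: the statement is an average-case hardness \emph{conjecture}, and the paper gives no proof of it --- it is stated as an assumption, followed only by a short discussion of parameter choices and known algorithms. So recognizing that the honest answer is ``this cannot be proved, only justified'' is exactly the right move, and your justification (NP-completeness of the worst case \cite{BMvT78}, absence of a worst-case to average-case reduction, survey of known attacks) tracks the paper's own informal discussion fairly closely.

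Two specific steps in your justification are flawed, however. First, your claim that for constant $R$ and $\omega$ the information-set-decoding exponents ``stay bounded away from zero'' is false as stated, and the paper itself says so: for $\omega \in [(1-R)\frac{q-1}{q},\, 1-(1-R)\frac{q-1}{q}]$ the problem is solvable in time $poly(n)$ by Prange's algorithm \cite{Pra62}. The qualifier ``suitable choices of $R$ and $\omega$'' in the conjecture is doing essential work precisely to exclude this easy zone, and any justification must restrict the attack survey to parameters outside it (e.g., the low-weight/Gilbert--Varshamov regime you mention earlier, but then drop). Second, the appeal to ``random self-reducibility of decoding random linear codes'' to argue that no noticeable fraction of $\mathcal{D}^1_{n,k,w}$-instances can be easy is not a known result for weight-constrained syndrome decoding: the weight constraint on $\ev$ obstructs the usual re-randomization arguments, and no such self-reduction is established in the literature or invoked by the paper. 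That step should be removed or replaced by the weaker, honest statement that the distribution $\mathcal{D}^1_{n,k,w}$ is the standard one against which all known cryptanalysis has been calibrated.
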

	
	 Moreover, the hardness holds even in the presence of quantum computers. One has to be careful with the choice of parameters. Indeed,  for constant values $0 < R,\omega < 1$, this problem can be solved in time $poly(n)$ using Prange's algorithm~\cite{Pra62} for $\omega \in [(1-R)\frac{q-1}{q},1 - (1-R)\frac{q-1}{q}]$, while the problem is believed to be exponentially hard outside this zone even for quantum computers, see for instance \cite{KT17}.\\

	\subsection{Permuted Kernel Problem (PKP)}
	The Permuted Kernel Problem was introduced by Shamir~\cite{Sha89}, and has the same linear algebra flavor as the syndrome decoding problem. The main difference is that the weight constraint is replaced with a combinatorical constraint, which seemingly removes the connection with decoding linear codes, while preserving the NP-completeness of the problem. The problem is defined as follows.\\
	
	\begin{problem}[Permuted Kernel Problem $\PKP(n,k,\vv)$]\label{Problem:PermutedKernelProblem}
		
		\textbf{Input:} A matrix $\Hm \in \F_q^{(n-k)\times n}, \ n, k \in \N$, a column vector $\vec{s} \in \F_q^{n-k}$, a column vector $\vv \in \F_q^n$. 
		
		\textbf{Goal:} Find a permutation $\sigma$ acting on $[n]$ that satisfies $\Hm \sigma(\vv) = \sv$.
	\end{problem}
	
	 As in the case of the syndrome decoding problem, it is believed that the permuted kernel problem is hard on average for suitable choices of $\vv$, i.e. given the instances of the problem is sampled from the distribution $\mathcal{D}_{n,k}(\vv)$ sampled as follows:
	\begin{align*}
		\mathcal{D}_{n,k}(\vv) :\Hm \xleftarrow[]{\$} \F_q^{(n-k)\times n}, \sigma \xleftarrow[]{\$} \Perm_{[n]}, \textrm{ return } (\Hm,\sv = \sigma(\vv),\vv).
	\end{align*}
	
	 One can see that the permuted kernel problem is similar to the syndrome decoding one in many respects. In fact, if we observe the binary syndrome decoding problem over the Hamming weight, and the corresponding permuted kernel problem over the binary field, we can easily show that the two are equivalent.\footnote{If the weight constraint from the syndrome decoding problem, $wt_H(\ev) = w$, is replaced with the permutation constraint from the permuted kernel problem, $\ev = \sigma(\vv)$ (where $\vv$ is given and it satisfies $wt_H(\vv) = w$), a solution to first problem yields a solution to the second, and vice versa.} For other weight functions, however, they are not equivalent as the set of potential solutions of the syndrome decoding problem is generally wider than the set of potential solutions of the permuted kernel problem\footnote{The number of non-binary words of weight $w$, length $n$, and alphabet of size $q$ is given as the number of permutations of any word of a given weight and length multiplied by the number of compositions of $w$ into $n$ parts taking values in $\{0,\dots,q-1\}$. In the binary case, there is only one possible such composition, while in the non-binary case the number of such compositions can be, and in most cases it is, greater than $1$.}. Nevertheless, it can be shown that for the properly chosen parameters, finding a solution of the permuted kernel problem yields a solution to the syndrome decoding problem, as summarized in the following proposition.\\
	
	\subsection{Reduction from SD to PKP}
		 Other versions of syndrome decoding problem, namely those with a weight function other than the Hamming weight and an alphabet size greater than $2$, start to attract more attention of the cryptographic community in the recent years as they yield potentially harder computational problems and hence more efficient code-based schemes. One example of a weight function that is known from coding theory and that recently gained more attention in cryptography is the Lee weight $wt_L$, defined as: $\forall \textbf{e} = (e_0,...,e_{n-1}) \in \mathbb{F}_q^n, \quad wt_L(\textbf{e}) = \sum_{i = 0}^{n-1} \min(e_i, q - e_i).$ The hardness of $SD(n,k,w)$ using Lee's weight has been studied against classical adversaries as well as the quantum ones~\cite{CDE21,WKHBSP22} in the Information Set Decoding framework and it is shown that problem is computationally hard in both cases.

	\begin{definition}[Weight function from distance function]\label{Definition:WeightFunction} 
		We say that a weight function $wt : \F_q^n \rightarrow \mathbb{N}$ is constructed from a distance function is there exists  a distance function $d : \F_q \times F_q \rightarrow \mathbb{N}$ st. 
		\begin{align}
			 \forall \textbf{e} = (e_0,\dots,e_{n-1}) \in \mathbb{F}_q^n, \quad wt(\textbf{e}) = \sum_{i=0}^{n-1} d(e_i,0).
		\end{align}
	\end{definition}
	
	 This definition allows us to generalize the syndrome decoding problem over different metric spaces with potentially better security and/or better efficiency.

	\begin{definition}[Vector decomposition] Let $\xv \in \F_q^n$. 	The decomposition of $\xv$ is a $q$-tuple $\cv = (c_0, c_1, \dots, c_{q-1})$ in which $c_0$ counts the number of zeros in $\xv$, $c_1$ counts the number of ones, ..., and $c_{q-1}$ counts the number of $q-1$ elements in $\xv$.  In other words:
		$$c_i = |\{j \in \{0,\dots,n-1\} : x_j = i\}|.$$
	\end{definition}
	
	\begin{lemma}[\cite{Bo83}]\label{Lemma:VectorPermutations}
		The number of vectors in $\F_q^n$ having the same decomposition $\cv = (c_0, c_1, \dots, c_{q-1})$ is given by the multinomial coefficient  $\binom{n}{\cv} = \binom{n}{c_0, c_1, \dots, c_{q-1}} \eqdef \frac{n!}{c_0!c_1! \cdots c_{q-1}!}$. 
	\end{lemma}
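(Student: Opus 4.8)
The plan is to prove this by a direct sequential counting argument, or equivalently via the orbit--stabilizer theorem; I will sketch the direct approach first since it is the most self-contained. Throughout I assume the standing admissibility condition $\sum_{i=0}^{q-1} c_i = n$, without which no vector has decomposition $\cv$ and the multinomial coefficient is not defined.

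First I would observe that specifying a vector $\xv \in \F_q^n$ with decomposition $\cv = (c_0, \dots, c_{q-1})$ is the same as specifying, for each symbol $i \in \{0, \dots, q-1\}$, the set of coordinates of $\xv$ that carry the value $i$; these $q$ sets partition $\{0, \dots, n-1\}$ into labelled blocks of prescribed sizes $c_0, \dots, c_{q-1}$. Hence the quantity we want is exactly the number of such ordered set partitions. I would then count them sequentially: choose the $c_0$ positions carrying $0$ in $\binom{n}{c_0}$ ways, then the $c_1$ positions carrying $1$ among the remaining $n - c_0$ in $\binom{n-c_0}{c_1}$ ways, and so on, obtaining the product $\prod_{i=0}^{q-1} \binom{n - \sum_{j<i} c_j}{c_i}$.

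Expanding each binomial as a ratio of factorials, the factorials telescope: the factorial $(n - \sum_{j\le i} c_j)!$ produced in the denominator of the $i$-th factor cancels against the leading factorial in the numerator of the $(i+1)$-th factor, and the final leftover denominator is $(n - \sum_j c_j)! = 0! = 1$ by the admissibility condition. What survives is precisely $\frac{n!}{c_0! c_1! \cdots c_{q-1}!} = \binom{n}{\cv}$, which is the claim. Alternatively, I would give a cleaner group-theoretic proof: let the symmetric group $S_n$ act on $\F_q^n$ by permuting coordinates. Two vectors lie in the same orbit iff they share the same decomposition, so the set to be counted is a single orbit. The stabilizer of any such vector is the Young subgroup $S_{c_0} \times \cdots \times S_{c_{q-1}}$ of permutations that only move coordinates carrying equal values, of order $c_0! \cdots c_{q-1}!$, and the orbit--stabilizer theorem immediately gives orbit size $\lvert S_n\rvert / \prod_i c_i! = n! / (c_0! \cdots c_{q-1}!)$.

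I do not expect a genuine obstacle, as this is a classical identity; the only point needing minor care is bookkeeping. In the direct argument the running partial sums $\sum_{j<i} c_j$ must be tracked explicitly so that the telescoping cancellation is carried out without an off-by-one slip, and one must invoke $\sum_i c_i = n$ to kill the final $0!$; in the group-theoretic argument the only thing to verify carefully is that the stabilizer is exactly the Young subgroup (no permutation outside it can fix a vector whose value on each block is constant and distinct across blocks).
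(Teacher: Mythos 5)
Your proof is correct. Note, however, that the paper does not prove this lemma at all: it is stated as a known fact and attributed to the reference \cite{Bo83}, so there is no in-paper argument to compare against. Both of your arguments are standard and sound --- the sequential choice of position sets with the telescoping product $\prod_{i}\binom{n-\sum_{j<i}c_j}{c_i}$ collapsing to $n!/(c_0!\cdots c_{q-1}!)$ under $\sum_i c_i = n$, and the orbit--stabilizer argument with the Young subgroup $S_{c_0}\times\cdots\times S_{c_{q-1}}$ as the exact stabilizer --- and either one would serve as a self-contained replacement for the citation.
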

	
	We now prove what is essentially a reduction from the SD problem to the PKP problem.
	
	\begin{proposition}[Reduction from SD to PKP]
		Take a constant alphabet size $q$. Assume we have an algorithm $\aa$ that can solve the $PKP(n,k,\vv)$ problem in time $t$. Then, we can use this algorithm to solve randomly chosen instance $(\Hm,\sv) \Unif D_{n,k,wt(\vv)}$ in time $t$  wp. $\ge \Omega(\frac{1}{\poly(q)})$. This is true for any weight function $wt()$ satisfying Definition~\ref{Definition:WeightFunction}.
	\end{proposition}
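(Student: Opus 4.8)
The plan is to build a distribution-preserving reduction: from the Syndrome Decoding instance $(\Hm,\sv)$ I will manufacture a Permuted Kernel instance on which $\aa$ is guaranteed to run, feed it to $\aa$, and then translate the permutation it returns back into an $\SD$ solution. The key structural fact I would lean on is Lemma~\ref{Lemma:VectorPermutations}: two vectors of $\F_q^n$ are permutations of each other exactly when they share the same decomposition $\cv=(c_0,\dots,c_{q-1})$, and a fixed decomposition is shared by exactly $\binom{n}{\cv}$ vectors. I will also use that any weight function as in Definition~\ref{Definition:WeightFunction} is permutation invariant, since it is assembled coordinate-wise from a distance to $0$.

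First I would specify the reduction. On input $(\Hm,\sv)\Unif\mathcal{D}^1_{n,k,w}$ with $w=wt(\vv)$, it simply calls $\aa$ on $(\Hm,\sv,\vv)$, reading it as a $\PKP(n,k,\vv)$ instance, and if $\aa$ returns a permutation $\sigma$ it outputs $\ev'=\sigma(\vv)$. This costs time $t$ up to the negligible overhead of applying $\sigma$. Correctness of the output is immediate whenever $\aa$ succeeds: then $\Hm\sigma(\vv)=\sv$, and $wt(\sigma(\vv))=wt(\vv)=w$ by permutation invariance, so $\ev'$ is a bona fide solution of the $\SD$ instance.

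The substance is to show that $\aa$ does succeed with the claimed probability, which requires matching distributions. Let $\cv^\ast$ denote the decomposition of $\vv$ and let $\mathcal{E}$ be the event that the planted solution $\ev$ of the $\SD$ instance has decomposition $\cv^\ast$. Since $\ev$ is drawn uniformly from the weight-$w$ sphere and $\Hm$ is drawn independently, conditioning on $\mathcal{E}$ leaves $\Hm$ uniform and makes $\ev$ uniform among the $\binom{n}{\cv^\ast}$ vectors of decomposition $\cv^\ast$; by Lemma~\ref{Lemma:VectorPermutations} these are precisely $\{\sigma(\vv):\sigma\in\Perm_{[n]}\}$, and pushing a uniform $\sigma\Unif\Perm_{[n]}$ through $\vv$ induces exactly the uniform law on that set. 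Hence, conditioned on $\mathcal{E}$, the pair $(\Hm,\sv=\Hm\ev)$ is distributed identically to $\Hm\Unif\F_q^{(n-k)\times n},\ \sigma\Unif\Perm_{[n]},\ \sv=\Hm\sigma(\vv)$, i.e. exactly as $\mathcal{D}_{n,k}(\vv)$. On this event the triple fed to $\aa$ is genuinely from the distribution $\aa$ is promised to solve, so $\aa$ returns a valid $\sigma$ and the reduction outputs a correct $\SD$ solution. It remains to bound $\Pr[\mathcal{E}]$: by Lemma~\ref{Lemma:VectorPermutations},
\[
\Pr[\mathcal{E}]=\frac{\binom{n}{\cv^\ast}}{\sum_{\cv}\binom{n}{\cv}},
\]
the sum ranging over all decompositions of weight $w$. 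Taking $\vv$ to realise the decomposition that maximises $\binom{n}{\cv}$ makes the numerator the largest summand, so $\Pr[\mathcal{E}]$ is at least the reciprocal of the number of distinct weight-$w$ decompositions; as these are the nonnegative integer solutions of $\sum_i c_i=n$ and $\sum_i c_i\,d(i,0)=w$, their number is polynomially bounded for constant $q$, which yields the claimed $\Omega(1/\poly(q))$.

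The step I expect to be the main obstacle is the distributional identity in the third paragraph — verifying that conditioning the $\SD$ distribution on $\mathcal{E}$ reproduces $\mathcal{D}_{n,k}(\vv)$ on the nose. Two points need care: that the independently sampled $\Hm$ is unaffected by conditioning on a property of $\ev$, and that a uniform permutation applied to the fixed representative $\vv$ yields the uniform distribution over all vectors with decomposition $\cv^\ast$ (each such vector being obtained from exactly $c_0!\cdots c_{q-1}!$ permutations, a constant multiplicity that cancels). Everything else — the correctness conversion and the count of decompositions used only to turn the exact value of $\Pr[\mathcal{E}]$ into the asymptotic bound — is routine.
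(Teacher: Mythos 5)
Your proof is correct and follows essentially the same route as the paper's: both condition on the planted error $\ev$ sharing the (multinomial-maximizing) decomposition of $\vv$, and both bound the probability of this event by the reciprocal of the number of weight-$w$ decompositions, which is polynomial for constant $q$. The one refinement you add --- verifying that conditioning on this event reproduces exactly the average-case distribution $\mathcal{D}_{n,k}(\vv)$ fed to $\aa$ --- is a point the paper glosses over (its proof implicitly assumes $\aa$ finds a solution whenever one exists), and your version makes the reduction valid even when $\aa$ is only an average-case solver.
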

	
	\begin{proof}
		
		Let us fix an instance $(\Hm,\sv)$ of the syndrome decoding problem sampled from $\mathcal{D}_{n,k,w}$, and let $\ev \in S_w$ be the random vector st. $\Hm \ev = \sv$ sampled when constructing $(\Hm,\sv)$, as it is done in $\mathcal{D}_{n,k,w}$. Recall that $S_w \eqdef \{\xv \in \F_q^n: wt(\xv) = w\}$. Let also $C_w$ be the set of all decompositions of the elements from $S_w$, namely:
		\begin{align*}
			C_w \eqdef \{\cv = (c_0,\cdots,c_{q-1}) \in \mathbb{N}^q : \forall i \in \{0,\dots,q-1\},  \sum_{i = 0}^{q-1} c_i = n \textrm{ and } \sum_{i = 0}^{q-1} c_i d(0,i) = w\}.
		\end{align*}
		
		 The size of $S_w$ can now be calculated as the number of vectors having a decomposition in $C_w$, summing over $\cv \in C_w$. Using \ref{Lemma:VectorPermutations}, we  get
		\begin{align*}
			|S_w| = \sum_{\vec{c}\in C_w} \binom{n}{\vec{c}}.
		\end{align*}
		
		Now, let $C$ be the set of all decompositions, namely
		\begin{align*}
			C \eqdef \{\cv = (c_0,\cdots,c_{q-1}) \in \mathbb{N}^q : \forall i \in \{0,\dots,q-1\},  \sum_{i = 0}^{q-1} c_i = n\}.
		\end{align*}
		 The size of this set $C$ is known to be $\binom{n+q-1}{q-1}$\cite{Bo83}. Moreover, we have $C = \bigcup_w C_w$ hence $|C_w| \le |C|$ for each $w$. This implies
		\begin{align*}
			|S_w| = \sum_{\cv\in C_w} \binom{n}{\cv} \leq\binom{n+q-1}{q-1}\mathop{\max}\limits_{\cv\in C_w} \binom{n}{\cv}.
		\end{align*}
		
		 Let $\tilde{\cv}$ st. $\mathop{\max}\limits_{\cv \in C_w} \binom{n}{\cv} = \binom{n}{\tilde{\cv}}$, and let $\tilde{\vv}$ be a vector that has the decomposition given by $\tilde{\cv}$. We then fix an instance of the permuted kernel problem, $\PKP(n, k,\widetilde{v})$, for which the parity check matrix, and the syndrome are the same as for the syndrome decoding instance.
		
		 We run our algorithm $\aa$ to find a solution of this PKP instance we constructed. A solution to this problem will directly yield a solution to our original $SD$ problem, but it is possible that there is no solution to this PKP problem. Notice however that if the error vector $\ev$ (given by the $SD(n,k,w)$ instance) has the decomposition $\tilde{\cv}$, then our $PKP(n,k)$ instance will have a solution. Given that $\ev$ is sampled uniformly at random from the set $S_w$, this probability is then given by: $$\frac{\binom{n}{\tilde{\cv}}}{|S_w|} \geq \frac{\binom{n}{\tilde{\cv}}}{\binom{n+q-1}{q-1} \binom{n}{\tilde{\cv}}} = \frac{1}{\binom{n+q-1}{q-1}} \geq \frac{1}{(n+q-1)^{q-1}} = \Omega \Big(\frac{1}{\poly(n)}\Big),$$
	\end{proof}

\section{Identification and digital signature schemes}
\subsection{Identification schemes}
\subsubsection{First definitions}
In this article, we study $3$-round identification schemes, also known as $\Sigma$-protocols, defined as follows.
\begin{definition}[3-round identification scheme]
	A $3$-round identification scheme is an interactive protocol consisting of the following $3$ algorithms, which should be efficient with respect to some security parameter $\lambda$:
\begin{itemize} 
	\item a key generation algorithm $\ISKeygen(1^\lambda) \rightarrow (pk,sk)$,
	\item the prover's algorithms $\ISProver = (P_1,P_2)$ satisfying $P_1(sk) \rightarrow (x,St)$ and $P_2(sk,x,c,St) \rightarrow z$, where $x$ is the first message, $St$ is some internal state, $c$ is the challenge from the verifier and $z \in R$ the prover's response (i.e. second message),
	\item the verifier's algorithm $\ISCheck(pk,x,c,z)$ that outputs $1$ if the verifier accepts the interaction and $0$ otherwise.
\end{itemize}
\end{definition}

 We describe below the main steps of an identification scheme.
\begin{center}\cadre{\begin{center}
			Identification scheme $\IS$
		\end{center}
		
		\textbf{Initialization.} Let $(pk,sk) \leftarrow \ISKeygen(1^\lambda)$. The prover gets $(pk,sk)$ and the verifier gets $pk$.
		
		\textbf{Interaction.}
		\begin{enumerate}
			\setlength\itemsep{-0.2em}
			\item The prover generates $(x,St) \leftarrow P_1(sk)$ and sends $x$ to the verifier.
			\item The verifier picks a uniform $c$ from the challenge set $C$ and sends $c$ to the prover.
			\item The prover generates $z \leftarrow P_2(sk,x,c,St)$ and sends $z$ to the verifier.
		\end{enumerate}
		
		\textbf{Verification.} The verifier accepts iff. $\ISCheck(pk,x,c,z) = 1$. \\
}\end{center}
$ \ $ \\

 A transcript consists of the messages exchanged between the prover and the verifier when running $\IS$. In our setting, a transcript is described by $(x,c,z)$. We now define different distribution sof transcripts.

\begin{definition}[Distributions of transcripts, honest behavior]
	For an identification scheme $\IS$, $trans_{\IS}(pk,sk)$, is the distribution of transcripts when both the prover and the verifier are honest. This distribution is sampled as follows:
	$$trans_{IS}(pk,sk): (x,St) \leftarrow P_1(sk), c \Unif C, z \leftarrow P_2(sk,x,c,St), \mathrm{ return } (x,c,z).$$
\end{definition}
We also consider transcripts in the case an adversary $\aa$ who only knows $pk$ tries to impersonate the prover.
\begin{definition}
	For an identification scheme $\IS$ and an adversary $\aa(pk)$, $trans^{\aa}_{\IS}(pk)$, is the distribution of transcripts when an adversary $\aa =(\aa_1,\aa_2)$ that only knows the public key interacts with the verifier.
	$$trans^{\aa}_{\IS}(pk): (\tilde{x},\tilde{St}) \leftarrow \aa_1(pk), c \Unif C, \tilde{z} \leftarrow \aa_2(pk,\tilde{x},c,\tilde{St}), \mathrm{ return } (\tilde{x},c,\tilde{z}).$$
\end{definition}

\subsubsection{Completeness,soundness and zero-knowledge}
 An identification must satisfy the completeness, soundness and zero-knowledge properties that we describe below.

\begin{definition}[Completeness]
	An identification scheme $\IS$ is perfectly complete\footnote{In this paper, we construct a scheme that is perfectly complete. In general, this condition can be relaxed so that "almost perfectly complete scheme", where the probability above is very close to $1$, can also be considered as a proper identification scheme.} iff.
	\begin{align*}
		\Pr\left[\ISCheck(pk,x,c,z) = 1 \left| \substack{(pk,sk) \leftarrow \ISKeygen(1^\lambda) \\ (x,c,z) \leftarrow trans_{\IS}(pk,sk)} \right.\right] = 1.
	\end{align*}
\end{definition}

For the soundness property, we define the soundness advantage as follows

\begin{definition}[Soundness Advantage]
	Let $\IS$ be an identification scheme. For any adversary $\aa$
	\begin{align*}
		Adv_\IS(\aa) = \Pr\left[\ISCheck(pk,{x},c,{z}) = 1 \left| \substack{(pk,sk) \leftarrow \ISKeygen(1^\lambda) \\ ({x},c,{z}) \leftarrow trans_A(pk)} \right.\right].
	\end{align*}
\end{definition}

Notice that we write these definitions in order to perform concrete security statements and not asymptotic statements, where we would only require that an adversary that runs in time $poly(\lambda)$ has soundness advantage $negl(\lambda)$. \\

The zero-knowledge property ensures that the transcripts generated by the identification scheme do not reveal any information, in particular about the secret key. The way the notion of not revealing any information is formalized is through the existence of an efficient simulator $Sim(pk)$ who can output transcripts indistinguishable from real transcripts. For a specified simulator, $Adv^{HVZK}_\IS(\aa)$ corresponds to the probability that an adversary $\aa$ distinguishes a real transcript from a simulated transcript. 

\begin{definition}[Honest verifier Zero-knowledge advantage]
	Let $\IS$ be an identification scheme for which we specified a simulator $Sim$. For any adversary $\aa$, we define 
	$$ Adv^{HVZK}_\IS(\aa) = 
	\left|\Pr[b = 1 \left| \substack{(pk,sk) \leftarrow \ISKeygen(1^\lambda) \\ \tau = (x,c,z)\leftarrow trans_{\IS}(pk,sk) \\ b \leftarrow \aa(\tau,pk)}\right. ] - 
	\Pr[b = 1 \left| \substack{(pk,sk) \leftarrow \ISKeygen(1^\lambda)  \\ \tau = (x,c,z) \leftarrow Sim(pk) \\ b \leftarrow \aa(\tau,pk)}\right. ] \right|.$$
\end{definition}

As we said in the introduction, an important concept will actually be the notion of multi-HVZK, where we give to an adversary $t$ real or simulated transcripts and we want to consider the distinguishing advantage in this case. 

\begin{definition}[Multi-Honest verifier Zero-knowledge advantage]
	Let $\IS$ be an identification scheme for which we specified a simulator $Sim$. For any adversary $\aa$, we define 
	$$ Adv^{t-HVZK}_\IS(\aa) = 
	\left|\Pr[b = 1 \left| \substack{(pk,sk) \leftarrow \ISKeygen(1^\lambda) \\ \tau_1,\dots,\tau_t \leftarrow trans_{\IS}(pk,sk) \\ b = \aa(\tau_1,\dots,\tau_t,pk)}\right. ] - 
	\Pr[b = 1 \left| \substack{(pk,sk) \leftarrow \ISKeygen(1^\lambda)  \\ \tau_1,\dots,\tau_t \leftarrow Sim(pk) \\ b = \aa(\tau_1,\dots,\tau_t,pk)}\right. ] \right|.$$
\end{definition}

\subsubsection{Parallel repetition of an identification scheme}
For any identification scheme $\IS$, we define $\IS^{\otimes r}$ as the $r$-fold parallel repetition of $\IS$, which consists of the following:

\begin{center} \cadre{\begin{center}
			Identification scheme $\IS^{\otimes r}$
		\end{center}
		
		\textbf{Initialization.} Let $(pk,sk) \leftarrow \ISKeygen(1^\lambda)$. The prover gets $(pk,sk)$ and the verifier gets $pk$.
		
		\textbf{Interaction.}
		\begin{enumerate}
			\setlength\itemsep{-0.2em}
			\item For each $i \in [r]$, the prover generates $(x^i,St^i) \leftarrow P_1(sk)$ and then sends $\xv = x^1,\dots,x^r$ to the verifier.
			\item The verifier picks a random $\mathbf{c} = c^1,\dots,c^r$ for independently randomly chosen challenges and sends $\mathbf{c}$ to the prover.
			\item The prover generates $\zv = (z^1,\dots,z^r)$, where for each $i \in [r]$, $z^i \leftarrow P_2(sk,x^i,c^i,St^i)$, and then sends $\zv$ to the verifier.
		\end{enumerate}
		
		\textbf{Verification.} The verifier accepts iff. $\forall i \in [r], \ISCheck(pk,x^i,c^i,z^i) = 1$. 
} \end{center}

\subsection{Signature schemes}\label{Section:Fiat-Shamir}
A signature scheme $S$ consists of $3$ algorithms  $(\SKeygen,\SSign,\SVerify)$:
\begin{itemize}
	\item $\SKeygen(1^\lambda) \rightarrow (pk,sk)$ is the generation of the public key $pk$ and the secret key $sk$ from the security parameter $\lambda$.
	\item $\SSign(m,pk,sk) \rightarrow \sigma_m$ : generates the signature $\sigma_m$ of a message $m$ from $m,pk,sk$.
	\item $\SVerify(m,\sigma,pk) \rightarrow \zo$ verifies that $\sigma$ is a valid signature of $m$ using $m,\sigma,pk$. The output $1$ corresponds to a valid signature.
\end{itemize}
We require from our signature schemes correctness and the EUF-CMA property.
\begin{definition}
A signature scheme is correct iff. when we sample $(pk,sk) \leftarrow \SKeygen(1^\lambda)$, we have for each $m$
$$ \Pr[\SVerify(m,\SSign(m,pk,sk),pk)] = 1.$$
\end{definition}
We consider the standard EUF-CMA security for signature schemes. To define the advantage of an adversary $\aa$, we consider the following EUF-CMA game \\ \\

\cadre{
	EUF-CMA game \\
	\hline
	$(pk,sk) \leftarrow \SKeygen(1^\lambda)$. \\
	Run
	$(m^*,\sigma^*) \leftarrow \aa(pk)$, where 
	$\aa(pk)$ can perform $q$ sign queries. More precisely, $\aa(pk)$ constructs $m_1,\dots,m_q$ and receives $\sigma_1 \leftarrow \SSign(m_1,sk,pk),\dots,\sigma_q \leftarrow\SSign(m_q,sk,pk)$. Then $\aa$ performs a procedure to output $(m^*,\sigma^*)$. \\
	$\aa$ wins the game iff. $\SVerify(m^*,\sigma^*,pk) = 1 \wedge \forall i \in [q], m^* \neq m_i$.
	} $ \ $ \\

 The EUF-CMA advantage is defined as follows
\begin{definition}
	For a signature scheme $S$ and an adversary $\aa$, we define 
	$$Adv_S^{\textrm{EUF-CMA}}(\aa) = \Pr[\aa \mbox{ wins the } \textrm{EUF-CMA } \mathrm{ game}].$$
\end{definition}

\subsubsection{The Fiat-Shamir transform} We can directly construct a signature scheme from an identification scheme via the Fiat-Shamir transform~\cite{FS87}. From an identification scheme $\IS = (\ISKeygen,\ISProver = (P_1,P_2),\ISCheck)$, we define the following signature scheme $\ss = (\SKeygen,\SSign,\SVerify)$ that uses a random function $\hh$:  \\ \\
\cadre{The Fiat-Shamir transform: constructing $S$ from $\IS$ \\ \hline 
	\begin{itemize}
		\item $\SKeygen(1^\lambda) = \Init(1^\lambda)$
		\item $\SSign(m,pk,sk) : (x,St) \leftarrow P_1(pk), c \leftarrow \hh(x,m), z \leftarrow P_2(sk,x,c,St)$, output $\sigma = (x,z)$.
		\item $\SVerify(m,\sigma = (x,z),pk) = V(pk,x,\hh(x,m),z)$.
	\end{itemize}
} $ \ $ \\

The transformation is proven to be secure in both classical and quantum random-oracle model, if the identification has good properties. Recently in~\cite{GHHM21}, the Fiat-Shamir is proven to be secure essentially if the underlying identification scheme has low soundness advantage and low $t$-HVZK advantage. For Stern's identification scheme, the soundness advantage will be immediate and the main focus of our work will be on the $t$-HVZK advantage. For instance the attack we present later in the paper uses the fact that one of the proposed variants of Stern's identification has a small HVZK-advantage but a big $t$-HVZK advantage for $t$ large enough.

\subsection{Commitment schemes}
A commitment scheme consists of $3$ algorithms:
\begin{itemize}
	\item A commit function $Commit(s) \rightarrow (c,St)$. $c$ is the commitment and $St$ is some internal state (usually some private randomness), used for the opening.
	\item An open function $Open(s,St) \rightarrow o$ that outputs the opening of the string $s$. 
	\item A function that verifies the validity of an opening $o$ for a string $s$ and a commitment $c$: $Ver(c,s,o) \rightarrow \zo$.
\end{itemize}
For a commitment $c$, $(s,o)$ is valid iff. $Ver(c,s,o) = 1$. Ideally, we want $2$ properties from a commitment scheme. First, we want a commitment scheme to be hiding, meaning that any adversary $\aa$ receiving only the commitment $c$ shouldn't have information about the committed value. More precisely, $\forall s,s'$, we want the following advantage to be as small as possible:
\begin{align*}
	Adv^{Hiding}(\aa) = \left| 
	\Pr[b = 1 \left| \substack{(c,St) \leftarrow Commit(s) \\ b \leftarrow \aa(s,c)}\right.] - 
	\Pr[b = 1 \left| \substack{(c,St) \leftarrow Commit(s') \\ b \leftarrow \aa(s',c)}\right.] \right|
\end{align*}
Again, we deal with concrete security so we don't specify what condition we want for the hiding advantage. If we dealt with asymptotic security, we would require that for any adversary running in time $poly(\lambda)$ for some security parameter $\lambda$, the above advantage is negligible in $\lambda$.
The second property we want from a commitment scheme is the binding property, which means that an adversary cannot produce a commitment which can be revealed to $2$ different values. We therefore want the following advantage to be as small as possible:
\begin{align*}
	Adv^{Binding}(\aa) = \Pr[Ver(c,s,o) = Ver(c,s',o') = 1 \wedge s \neq s' \left| (c,s,o,s',o') \leftarrow \aa() \right.]
\end{align*}

\section{Stern's identification scheme}
\subsection{Description of the scheme}

In it's original form, the protocol is based on the binary syndrome decoding problem over the Hamming weight, where the instance of the syndrome decoding problem is taken from the distribution $\mathcal{D}_{n,k,w}$ defined in the previous section. As mentioned in the introductory part, different variants of the scheme were proposed to mitigate the problem of protocol's high communication cost. The most prominent ones, however, are based on the use of pseudo-random generators, so we focus on them in the next two sections.\\

 We describe the scheme for the case of binary syndrome decoding but our descriptions can be easily extended to the case of PKP with larger alphabet size $q$. We consider a commitment scheme $(Commit,Open,Ver)$ and let $Perm_n$ be the set of permutations acting on $[n]$. Stern's $1$ round identification protocol is the following:

\begin{openbox}[\textit{$1$ round Stern's identification scheme, generic commitment, $\IS_{Stern}^1$}]
	 \textbf{Initialization.} Let $(\Hm,\ev,\sv = \Hm\ev) \xleftarrow[]{\$} \mathcal{D}_{n,k,w}$ \ref{Conjecture:AvgHardnessBSD}, $sk = \ev$, $pk = (\Hm,\sv)$. The prover obtains $(pk,sk)$ and the verifier obtains $pk$.
	
	 \textbf{Interaction.}  
	\begin{itemize}
		\item The prover picks a random permutation $\pi \in Perm_n$ and a random column vector $\yv \xleftarrow[]{\$} \F_q^n$, and then computes $\tv = \Hm  \yv$. The prover computes $z_1 = (\pi,\tv)$, $z_2 = (\pi(\yv))$, $z_3 = (\pi(\yv + \ev))$. For $i \in \{1,2,3\}$, he then constructs $(x_i,St_i) \leftarrow Commit(z_i)$ and sends each $x_1,x_2$ and $x_3$ to the verifier.\\
		
		\item The verifier sends a random challenge $c \xleftarrow[]{\$} \{1,2,3\}$ to the prover.\\
		
		\item The prover reveals information so that the verifier can recover $z_i$ for each $i \in \{1,2,3\} \setminus c$. More precisely:
		\begin{enumerate}
			\item if $c = 1$, the prover computes $o_2 = Open(z_2,St_2), o_3 = Open(z_3,St_3)$ and sends $z_2,z_3,o_2,o_3$ to the verifier.
			\item if $c = 2$, the prover computes $o_1 = Open(z_1,St_1), o_3 = Open(z_3,St_3)$ and sends $z_1,z_3,o_1,o_3$ to the verifier.
			\item if $c = 3$, the prover computes $o_1 = Open(z_1,St_1), o_2 = Open(z_2,St_2)$ and sends $z_1,z_2,o_1,o_2$ to the verifier.
		\end{enumerate}
	\end{itemize}	
	 \textbf{Verification.} 
	\begin{enumerate}
		\item if $c =1$, the verifier checks that $|z_2 + z_3| = w$ and checks the validity of the commitments $z_2,z_3$ so he checks that $Ver(x_2,z_2,o_2) = Ver(x_3,z_3,o_3) = 1$.
		\item if $c = 2$, the verifier gets $z_1 = (\pi,\tv)$ and $z_3 = \pi(\yv + \ev)$. He checks that $\Hm(\pi^{-1}(z_3)) = \Hm(\yv + \ev) = \tv + \sv$. He also checks the validity of the commitments $z_1$ and $z_3$.
		\item if $c = 3$, the verifier gets $z_1 = (\pi,\tv)$ and $z_2 = \pi(\yv)$. He checks that $\Hm(\pi^{-1}(z_2)) = \Hm(\yv) = \tv$. He also checks the validity of the commitments $z_1$ and $z_2$.
	\end{enumerate}
\end{openbox}

\begin{definition}
	Stern's identification scheme $\IS_{Stern}$ is the $R$-fold parallel repetition of $(\IS_{Stern}^1)$ described above, namely $(\IS_{Stern}^1)^{\otimes R}$. When aiming for $\lambda$ bits of classical security, we take $R = \lambda \log_2(3)$, and we double the number of rounds $R$ if we want $\lambda$ bits of quantum security.
\end{definition}

\subsection{Basic properties of Stern's identification scheme}
Stern's identification scheme is clearly complete. Soundness and HVZK properties are also known from this scheme and a thorough analysis has been done for instance in~\cite{Lei18}. Regarding soundness, we have the following
\begin{proposition}
	$$ Adv_{\IS_{Stern}^{1}}(\aa) \le \frac{2}{3} + Adv^{Binding}(\aa) + Adv^{SD}(\aa).$$
	where $Adv^{SD}$ denotes the probability to break the underlying syndrome decoding problem. 
\end{proposition}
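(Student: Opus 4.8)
The plan is to carry out the standard special-soundness analysis for a three-challenge $\Sigma$-protocol. The soundness error $\frac{2}{3}$ will come from the fact that a cheating prover can always prepare accepting answers for at most two of the three challenges, unless it manages to either break the binding property of the commitment scheme or solve the underlying syndrome decoding instance.

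First I would reduce to a deterministic prover by fixing the internal randomness $r$ of $\aa$. Conditioned on $r$, the first message $\tilde{x} = (x_1,x_2,x_3)$ output by $\aa_1$ is fixed, and so is the set $T_r \subseteq \{1,2,3\}$ of challenges on which $\aa_2$ produces an accepting response. Since the challenge $c$ is drawn uniformly and independently of $\tilde{x}$, the soundness advantage conditioned on $r$ equals $|T_r|/3$. Using the elementary bound $|T_r|/3 \le \frac{2}{3} + \und[|T_r| = 3]$ and averaging over $r$ yields
$$ Adv_{\IS_{Stern}^{1}}(\aa) \le \frac{2}{3} + \Pr_r[|T_r| = 3].$$

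Next I would bound $\Pr_r[|T_r| = 3]$ by constructing two reductions $\B_{SD}$ and $\B_{Bind}$ that internally rewind $\aa$. For an $r$ with $|T_r| = 3$, running $\aa$ on the three challenges $c = 1,2,3$ produces three accepting transcripts sharing the \emph{same} first message $\tilde{x}$, since $\aa_1$ commits before seeing $c$. From $c = 3$ one recovers $z_1 = (\pi,\tv)$ and $z_2$ with $\Hm\pi^{-1}(z_2) = \tv$; from $c = 2$ one recovers $z_1 = (\pi,\tv)$ and $z_3$ with $\Hm\pi^{-1}(z_3) = \tv + \sv$; and from $c = 1$ one recovers $z_2,z_3$ with $|z_2 + z_3| = w$. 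I would then split into two disjoint cases. If some commitment opened under two different challenges (for instance $x_2$, revealed for both $c=1$ and $c=3$) is opened to two distinct valid values, then $\B_{Bind}$ outputs this pair and wins the binding game. Otherwise all revealed values are consistent, and $\B_{SD}$ sets $\ev' = \pi^{-1}(z_3 - z_2)$; combining the three checks gives $\Hm\ev' = (\tv + \sv) - \tv = \sv$, and using permutation invariance of the weight together with $z_3 - z_2 = z_2 + z_3$ over $\F_2$ gives $|\ev'| = |z_2 + z_3| = w$, so $\ev'$ is a valid syndrome decoding solution. As the two cases are disjoint, this gives $\Pr_r[|T_r| = 3] \le Adv^{Binding}(\B_{Bind}) + Adv^{SD}(\B_{SD})$, and substituting into the displayed inequality completes the proof.

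The main obstacle I anticipate is the bookkeeping around commitment consistency: the soundness definition is single-shot and has no built-in rewinding, so the entire argument hinges on fixing $\aa$'s randomness to render both the first message and the answerable set $T_r$ deterministic, and then formalizing that three accepting transcripts with a common first message either expose a binding violation or feed cleanly into the special-soundness extractor. Some care is also needed to confirm that the extracted $\ev'$ has weight \emph{exactly} $w$ rather than at most $w$, and that $\B_{SD}$ and $\B_{Bind}$ run in time comparable to $\aa$ (three executions of $\aa$ plus a linear-algebraic extraction), so that the advantages on the right-hand side are those of genuinely efficient adversaries.
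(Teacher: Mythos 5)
Your proposal is correct and follows essentially the same route as the paper: the paper's proof is a short sketch invoking $3$-special soundness (win on all three challenges $\Rightarrow$ either a binding break or an extracted syndrome decoding solution, with details deferred to the cited reference), and your argument is precisely that reduction carried out in full, including the randomness-fixing step that converts special soundness into the $\frac{2}{3}$ soundness bound and the extraction $\ev' = \pi^{-1}(z_3 - z_2)$ with the weight check over $\F_2$.
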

\begin{proof}
	The way this statement is proved actually by using a notion called $3$-special soundness. We show that if an adversary can win the identification scheme simultaneously for the $3$ challenges then: or he broke the binding property of the commitment function, or solved the underlying syndrome decoding problem. We refer the reader to~\cite{Lei18} for details of this proof.
\end{proof}

Moreover, the above scheme is zero-knowledge, if the commitment scheme is hiding
\begin{proposition}~\label{Proposition:1roundHVZK}
	$$ Adv_{\IS_{Stern}^{1}}^{HVZK}(\aa) \le Adv^{Hiding}(\aa).$$
\end{proposition}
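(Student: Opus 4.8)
The plan is to exhibit an explicit simulator $Sim(pk)$ and to show that its output differs from an honest transcript only in a single unopened commitment, which is then handled by the hiding property. Recall that a transcript of $\IS_{Stern}^{1}$ has the form $(x_1,x_2,x_3,c,(z_i,o_i)_{i\neq c})$, where $x_i \leftarrow Commit(z_i)$ and, in the honest execution, $z_1=(\pi,\tv)$, $z_2=\pi(\yv)$, $z_3=\pi(\yv+\ev)$ with $\pi \leftarrow \Perm_n$ and $\yv \leftarrow \F_2^n$ uniform and $\tv=\Hm\yv$. The simulator first draws $c\leftarrow\{1,2,3\}$ uniformly (matching the honest verifier) and then samples the two values it must open from distributions chosen to match the honest marginals: for $c=1$, it takes $z_2$ uniform in $\F_2^n$, draws $\uv$ uniform of weight $w$, and sets $z_3=z_2+\uv$; for $c=2$, it draws $\pi$ and $z_3$ uniform and sets $\tv=\Hm(\pi^{-1}(z_3))-\sv$, $z_1=(\pi,\tv)$; for $c=3$, it draws $\pi$ and $z_2$ uniform and sets $\tv=\Hm(\pi^{-1}(z_2))$, $z_1=(\pi,\tv)$. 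In every case the two opened values are committed and opened honestly, while for the remaining index the simulator outputs a commitment $Commit(\star)$ to a fixed dummy value $\star$ that it never opens.

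First I would prove that the opened part of the transcript is perfectly simulated, i.e. that for each challenge the joint law of $(c,(x_i,z_i,o_i)_{i\neq c})$ is identical in the honest and simulated experiments. This is a direct computation from the uniformity of $\pi$ and $\yv$: for fixed $\pi$ the vector $\pi(\yv)$ is uniform, so each of $z_2,z_3$ is uniform; the sum $z_2+z_3=\pi(\ev)$ is, for uniform $\pi$ and fixed $\ev$ of weight $w$, uniform over the weight-$w$ vectors and independent of $\pi(\yv)$; and $\tv=\Hm\yv$ is a deterministic function of the revealed data, namely $\tv=\Hm(\pi^{-1}(z_2))$ when $c=3$ and $\tv=\Hm(\pi^{-1}(z_3))-\sv$ when $c=2$, using $\Hm\ev=\sv$. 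Hence the sampling rules above reproduce exactly the honest conditional distributions of the opened values, and since $Commit$ and $Open$ are applied identically in both worlds, the opened triples are equidistributed. Crucially, none of these rules uses the secret key $sk=\ev$.

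It then remains to handle the single unopened commitment, and this is where the hiding property enters and where the only real subtlety lies. Coupling the two experiments so that the opened parts coincide, the honest transcript carries $x_c=Commit(z_c)$ for the true (secret-dependent) value $z_c$, while the simulated one carries $x_c=Commit(\star)$. The difficulty is that $z_c$ is correlated with the already-revealed data, so one cannot treat the committed message as independent. I would resolve this by conditioning on the honest randomness $\rho=(\pi,\yv)$ and the key: for each fixed $\rho$ the value $z_c$ is determined, and a hiding adversary $\B$ can reconstruct the whole opened part from $\rho$, plant the hiding challenge $Commit(s_b)$ with $s_0=z_c$ and $s_1=\star$ in position $c$, run the distinguisher $\aa$ on the resulting transcript, and echo its bit. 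Then the hiding advantage of $\B$ on the pair $(z_c,\star)$ equals the distinguishing advantage of $\aa$ conditioned on $\rho$; averaging over $\rho$ and using that $Adv^{Hiding}$ is bounded uniformly over all message pairs yields $Adv^{HVZK}_{\IS_{Stern}^{1}}(\aa)\le Adv^{Hiding}(\B)$, which is the claimed inequality once $\B$ is identified with an adversary of essentially the same complexity as $\aa$. The main obstacle is exactly this correlation between $z_c$ and the revealed values; everything else is the perfect-simulation bookkeeping of the previous step.
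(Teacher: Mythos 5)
Your proof is correct and takes essentially the same approach as the paper: you build the same simulator (sample the two values to be opened from their honest, key-independent marginal distributions, and place a commitment to a dummy value in the unopened slot, where the paper commits to $\mathbf{0}$), then argue perfect simulation of the opened part and reduce the remaining discrepancy in the single unopened commitment $x_c$ to the hiding property. Your last step, with the explicit coupling on the prover's randomness $\rho$ and the planted hiding challenge, is a more careful rendering of what the paper dispatches in one sentence ("by definition of the hiding property"), but it is the same argument.
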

\begin{proof}
	We consider the following simulator:
	\begin{openbox}[\textit{Simulator}]
		\noindent {\textbf{Input.}} A public key ${pk} = (\Hm,\sv)$. \\
		\textbf{Execution.} Pick a random challenge $c \Unif \{1,2,3\}$.
		\begin{itemize}
			\item If $c = 1$, choose $z_2 \Unif \F_q^n,  y \Unif S_w$ and let $z_3 = z_2 + y$. For $i \in \{2,3\}, $ compute $(x_i,St_i) \leftarrow Commit(z_i)$ and $o_i = Open(z_i,St_i)$.
			\item If $c = 2$, choose $\pi \Unif Perm_n, \yv \Unif \F_q^n$ and set $\tv = \Hm \yv$, $z_1 = (\pi,\tv)$ and $z_2 = \pi(\yv)$. For $i \in \{1,3\}, $ compute $(x_i,St_i) \leftarrow Commit(z_i)$ and $o_i = Open(z_i,St_i)$.
			\item If $c = 3$, choose $\pi \Unif Perm_n, \yv' \Unif \F_q^n$ and set $\tv = \Hm \yv' - \sv$, $z_1 = (\pi,\tv)$ and $z_3 = \pi(\yv')$. For $i \in \{1,2\}, $ compute $(x_i,St_i) \leftarrow Commit(z_i)$ and $o_i = Open(z_i,St_i)$.
		\end{itemize}
		\ \ \quad Compute $(x_c,St_c) \leftarrow Commit(\mathbf{0})$ where $\mathbf{0}$ is the all $0$ string.\\
		\textbf{Output.}  $\tau = (x_1,x_2,x_3,c,o_{c'},o_{c''})$ where $c' \neq c''$ and $c',c'' \neq c$.
	\end{openbox}
	
	We also rewrite the procedure that generates real transcripts 
	
	\begin{openbox}[\textit{Real transcript}]
		\noindent {\textbf{Input.}} A public key ${pk} = (\Hm,\sv)$. A secret key $sk = \ev$. \\
		\textbf{Execution.} Pick a random permutation $\pi \in Perm_n$ and a random string $y \in \F_q^n$. Let $\tv = \Hm \yv$ and $z_1 = (\pi,\tv), z_2 = \pi(\yv), z_3 = \pi(\yv + \ev)$. Pick a random challenge $c \Unif \{1,2,3\}$. For $i \in \{1,2,3\}$, let $(x_i,St_i) \leftarrow Commit(z_i)$ and $o_i = Open(z_i,St_i)$. \\
		\textbf{Output.}  $\tau = (x_1,x_2,x_3,c,o_{c'},o_{c''})$ where $c' \neq c''$ and $c',c'' \neq c$.
	\end{openbox}

We now prove that distinguishing these $2$ distributions is equivalent to breaking the hiding property of the commitment scheme. First, notice that $c \Unif \{1,2,3\}$ in both cases. Now, for each challenge, we look at the distribution of $z_{c'},z_{c''}$ for the $2$ values $c',c'' \in \{1,2,3\}$ which are different from $c$.
\begin{itemize}
	\item If $c = 1$: both for the real and the simulated transcript, $z_2$ is a random string in $\F_q^n$ and $z_3 - z_2$ is a random string in $S_w$
	\item If $c = 2$: both for the real and the simulated transcript, $z_3$ is a random string in $\F_q^n$. Regarding $z_1 = (\pi,\tv)$, in both cases $\pi$ is a random permutation in $Perm_n$ and $\tv = \pi^{-1}(z_2) - \sv$.
	\item If $c = 3$, both for the real and the simulated transcript, $z_2$ is a random string in $\F_q^n$. Regarding $z_1 = (\pi,\tv)$, in both cases $\pi$ is a random permutation in $Perm_n$ and $\tv = \pi^{-1}(z_2)$.
\end{itemize}
The same distribution for $z_{c'},z_{c''}$ imply the same distributions for $(x_{c'},x_{c''},o_{c'},o_{c''})$ given $c$ for real and simulated transcripts. Finally, we are left with $x_c$. For real transcripts, it is the commitment of $z_c$ and for simulated transcripts, it is the commitment of $\mathbf{0}$. By definition of the hiding property, if we can distinguish these $2$ settings then we can break the hiding property of the commitment scheme.  
\end{proof}

\subsection{Optimized schemes}
\subsubsection{The choice of commitment}

From a hash function $\hh$, it is known how to construct a commitment scheme which is both hiding and binding. We take $Commit(s) = (c = \hh(s||\rho), St = \rho)$ for a randomly chosen $\rho$ of large enough length, and the opening consists of revealing the value $\rho$, from which one can verify that the commitment is well formed. However, this implies that in order to reveal $s$, one has to send this string $\rho$ each time. We will call this commitment a randomized commitment

In the literature regarding Stern's signature scheme, we most often see another choice for commitment, namely $Commit(s) = (c = \hh(s),St = \emptyset)$ and the opening is also empty so in order to open $s$, you can just reveal $s$ and the verification just checks that $c = \hh(s)$. This improves the communication cost. However, this scheme is not hiding anymore. Indeed, an adversary can distinguish $Commit(s)$ from $Commit(s')$ for fixed (and known) $s,s'$ just by recomputing these $2$ values (which he couldn't do before because he didn't know $\rho$). We will call this commitment a deterministic commitment. 

We haven't seen the choice of deterministic commitment widely discussed regarding security. It seems that this has been the default choice (except for example in~\cite{Lei18} and in MQDSS which is not based on the syndrome decoding problem but for which a similar discussion exists) because of it's higher efficiency and because the lack of hiding property didn't seem to be exploitable. It could be that the hiding property given by randomized commitments is overkill and that deterministic commitments are enough for having a secure scheme. This question is actually at the core of our contributions, where we show that there are security issues regarding deterministic schemes, but that they can be repaired much more efficiently than by using the randomized commitments described above.

\subsubsection{Adding seeds}

 Another standard technique for reducing the communication cost of identification protocols is to replace some of the random vectors with pseudo-random ones. We thus consider the following functions, i.e. pseudo-random generators: $$E_0 : \zo^{\seedsize} \rightarrow \zo^{\seedsize} \times \zo^{\seedsize} ,\  E_1 : \zo^{\seedsize} \rightarrow Perm_n, \ E_2 : \zo^{\seedsize}  \rightarrow \F_q^n,$$ where $\seedsize \in \N$ is the seed size.
We present below a state of the art $1$ round optimized scheme which uses deterministic commitments and seeds, which is similar to the one presented in~\cite{BGMS22}. We denote by $\hh_{comm}$ the hash function used for the commitment.

\begin{openbox}[\textit{Stern's identification scheme (optimized, $1$ round), $\IS_{Stern,opt}^1$}]
	 \textbf{Initialization.} Let $(\Hm,\ev,\sv = \Hm\ev) \xleftarrow[]{\$} \mathcal{D}_{n,k,w}$ \ref{Conjecture:AvgHardnessBSD}, $sk = \ev$, $pk = (\Hm,\sv)$. The prover obtains $(pk,sk)$ and the verifier obtains $pk$.\bigskip
	
	 \textbf{Interaction.}  
	\begin{itemize}
		\item The prover picks a random $\seed \in \zo^{\seedsize}$. Let $(\seed_{\pi},\seed_{\yv}) = E_0(\seed), \pi = E_1(\seed_\pi)$ and $\yv = (\pi)^{-1}(E_2(\seed_{\yv}))$. It then computes $\tv = \Hm  \yv$, $z_1 = (\pi,\tv),z_2 = \pi(\yv), z_3 = \pi(\yv + \ev)$, and computes for $i \in \{1,2,3\}, \ x_i = \hh_{comm}(z_i)$. He then sends $x_1,x_2$ and $x_3$ to the verifier.\\
		
		\item The verifier sends a random challenge $c \xleftarrow[]{\$} \{1,2,3\}$ to the prover.\\
		
		\item The prover reveals information so that the verifier can recover $x_i$ for each $i \in \{1,2,3\} \setminus c$. More precisely:
		\begin{enumerate}
			\item if $c = 1$, the prover sends $l_1 = (o_2,o_3)$ with $o_2 = \seed_{\yv}, \ o_3 = \pi(\ev)$.
			\item if $c = 2$, the prover sends $l_2 = (o_1,o_3)$ with $o_1 = \seed_{\pi}, \ o_3 = \yv + \ev$.
			\item if $c = 3$, the prover sends $l_3 = \seed$.
		\end{enumerate}
	\end{itemize}	
	 \textbf{Verification.} 
	\begin{enumerate}
		\item if $c =1$, the verifier recovers $z_2,z_3$ from $o_2,o_3$ by taking $z_2 = E_2(o_2)$ and $z_3 = z_2 + o_3$. Then for $i \in \{2,3\}$ checks that $\hh_{comm}(x_i) = z_i$ and that $wt(o_3) = w$.
		\item if $c = 2$, the verifier computes $ \pi = E_1(o_1), z_3 = \pi(o_3)$ then $\tv = \Hm o_3 - \sv$ and finally $z_1 = (\pi,\tv)$. Then for $i \in \{1,3\}$ checks that $\hh_{comm}(x_i) = z_i$.
		\item if $c = 3$, the verifier computes $(\seed_{\pi},\seed_y) = E_0(l_3)$,  $\pi = E_1(\seed_\pi)$ and $z_2 =E_2(\seed_{\yv})$, followed by $\tv = \Hm (\pi)^{-1}(z_2)$. From there, he constructs $z_1 = (\pi,\tv)$.  Then for $i \in \{1,2\}$ checks that $\hh_{comm}(x_i) = z_i$.
	\end{enumerate}
\end{openbox}

Notice that with deterministic commitments, there are here other optimizations regarding what the prover sends for each challenge. He doesn't reveal the full strings $z_{c'},z_{c''}$ but the minimal information required to recover them. 

\subsubsection{Replacing the first message with a global commitment}
Recall that the full scheme consists of $R$ parallel repetition of the above $1$ round schemes. In this case, it is possible to replace the first message by a global commitment of all the $x_i$. This allows to reveal the commitment only of the values not revealed (since they can be recovered from the openings). The scheme becomes the following.

\begin{openbox}[\textit{Stern's identification scheme (optimized, $R$ rounds), $\IS_{Stern,opt}$}]
	\textbf{Initialization.} Let $(\Hm,\ev,\sv = \Hm\ev) \xleftarrow[]{\$} \mathcal{D}_{n,k,w}$ \ref{Conjecture:AvgHardnessBSD}, $sk = \ev$, $pk = (\Hm,\sv)$. The prover obtains $(pk,sk)$ and the verifier obtains $pk$.\bigskip
	
	\textbf{Interaction.} 
	\begin{itemize}
		\item  For $j$ from $1$ to $R$: the prover picks a random $\seed^j \in \zo^{\seedsize}$ and lets $(\seed^j_{\pi},\seed^j_{\yv}) = E_0(\seed^j), \pi^j = E_1(\seed^j_\pi)$ and $\yv^j = (\pi^j)^{-1}(E_2(\seed^j_{\yv}))$. It then computes $\tv^j = \Hm  \yv^j$, $z^j_1 = (\pi^j,\tv^i),z_2 = \pi^i(\yv^i), z^i_3 = \pi^i(\yv^i + \ev)$, and computes for $i \in \{1,2,3\}, \ x^j_i = \hh_{comm}(z^j_i)$.
	\end{itemize}
	The prover sends $\widetilde{x} = \hh_{\comm}(x_1^1,x_2^1,x_3^1,\dots,x_1^R,x_2^R,x_3^R)$ to the verifier.
	\begin{itemize}
		\item For $j$ from $1$ to $R$: the verifier sends a random challenge $c^j \xleftarrow[]{\$} \{1,2,3\}$ to the prover.\\
		
		\item For $j$ from $1$ to $R$: the prover reveals information so that the verifier can recover $x^j_i$ for each $i \in \{1,2,3\} \setminus c$. More precisely:
		\begin{enumerate}
			\item if $c^j = 1$, the prover sends $l^j_1 = (o^j_2,o^j_3)$ with $o^j_2 = \seed_{\yv}, \ o^j_3 = \pi^j(\ev)$ and $x^j_1$.
			\item if $c^j = 2$, the prover sends $l^j_2 = (o^j_1,o^j_3)$ with $o^j_1 = \seed_{\pi}, \ o^j_3 = \yv^j + \ev$ and $x^j_2$
			\item if $c^j = 3$, the prover sends $l^j_3 = \seed$ and $x_3^j$.
		\end{enumerate}
	\end{itemize}	
	\textbf{Verification.} For $j$ from $1$ to $R$: 
	\begin{enumerate}
		\item if $c^j =1$, recovers $z^j_2,z^j_3$ from $o^j_2,o^j_3$ by taking $z^j_2 = E_2(o^j_2)$ and $z^j_3 = z^j_2 + o^j_3$. Then for $i \in \{2,3\}$ checks that $\hh_{comm}(x^j_i) = z^j_i$ and that $wt(o^j_3) = w$.
		\item if $c^j = 2$, computes $ \pi^j = E_1(o^j_1), z^j_3 = \pi^j(o^j_3)$ then $\tv^j = \Hm o^j_3 - \sv$ and finally $z^j_1 = (\pi^j,\tv^j)$. Then for $i \in \{1,3\}$ checks that $\hh_{comm}(x^j_i) = z^j_i$.
		\item if $c^j = 3$, computes $(\seed^j_{\pi},\seed^j_y) = E_0(l^j_{3})$,  $\pi^j = E_1(\seed^j_\pi)$ and $z^j_2 =E_2(\seed^j_{\yv})$, followed by $\tv^j = \Hm (\pi^j)^{-1}(\yv^j)$. From there, he constructs $z^j_1 = (\pi^j,\tv^j)$.  Then for $i \in \{1,2\}$ checks that $\hh_{comm}(x^j_i) = z^j_i$.
	\end{enumerate}
	The verifier also checks that $\hh_{\comm}(x_1^1,x_2^1,x_3^1,\dots,x_1^R,x_2^R,x_3^R)$ is equal to the value $\widetilde{x}$ he received. 
\end{openbox}

Here, we described an identification scheme. The resulting signature schemes is constructed using the Fiat-Shamir transform presented in Section~\ref{Section:Fiat-Shamir}. This means each signature is of the form:

$$ \sigma(m) = (\widetilde{x},x^1_{c_1},\dots,x^R_{c_R},l^1_{c_1},\dots,l^R_{c_R}),$$
where $c = (c_1,\dots,c_R) = \hh_{\comm}(\widetilde{x}||M)$.

\section{Attack on optimized Stern's signature schemes}~\label{Section:Attack}
Here, we show that the combination of deterministic commitment and of seeds with $\seedsize = \lambda$ gives only $\lambda/2$ bits of security instead of the claimed $\lambda$ bits of security from~\cite{BGMS22}.

\begin{openbox}[Attack on the optimized signature scheme] 
	$$ \ $ 
\begin{itemize}
	\item The adversary performs $q_s = \lceil\frac{1}{R}2^{\seedsize/2}\rceil$ signature queries, with not necessarily distinct messages, to obtain the signatures $\sigma_{m_1},\dots,\sigma_{m_{q_s}}$ from the signer, where each $\sigma(m_k)$ is written as
	$$\sigma_{m_k} = \big(\widetilde{x}_k, x^1_{c_1,k},\dots,x^R_{c_R,k},l^1_{c_1,k},\dots,l^R_{c_R,k}) \big).$$
	\item The adversary tries to find a pair $((i,k),(i',k'))$ st. 
	\begin{align}\label{Eq:AttackCondition}(i,k) \neq (i',k') \wedge x_{2,k}^i = x_{2,k'}^{i'} \wedge c_k^i = 2 \wedge c_{k'}^{i'} \neq 2.\end{align}
	If none are found, return to the first step.
	
	\item From such a pair of couples $((i,k),(i',k'))$, the adversary recovers the secret key as follows: 
	\begin{itemize}
		\item Since $c^{i}_k = 2$, we have that from $l^i_2$, the adversary can recover $\pi^i_k = E_1(z^i_{1,k})$ and $\pi^i_k(\yv^i_k + \ev_k) = \pi(z^i_{3,k}).$
		\item Since $c^{i'}_{k'} \neq 2$, then the adversary obtains either $\seed^{i'}_{\yv, k'}$ or $\seed^{i'}_{k'}$ from which $\seed^{i'}_{\yv, k'}$ can be calculated. Moreover, as $x_{2, k'}^{i'} = x_{2, k}^i$, with overwhelming probability $\seed^i_{\yv, k} = \seed^{i'}_{\yv, k'}$, the adversary learns $\pi^i_k(\yv^i_k) = E_2(\seed^{i'}_{\yv, k'}).$\\
	\end{itemize}
	
	\item From $\pi^i_k, \ \pi^i_k(\yv^i_k), \ \pi^i_k(\yv^i_k + \ev)$, the adversary thus recovers the secret key by computing
	$$\ev = (\pi^{i}_k)^{-1}\left(\pi^i_k(\yv^i_k + \ev) - \pi^i_k(\yv^i_k)\right) = (\pi^{i}_k)^{-1}\left(\pi^i_k(\ev)\right),$$
	where $(\pi^{i}_k)^{-1}$ is the inverse of  $(\pi^{i}_k)$.\\
\end{itemize}
	
\end{openbox}

\begin{theorem}
	The attack presented above finds the secret key, $\ev$, in time $O(2^{\seedsize/2})$ doing $O(\frac{1}{r}2^{\seedsize/2})$ signature queries. The scheme thus preserves at most $\frac{\seedsize}{2}$ bits of security.
\end{theorem}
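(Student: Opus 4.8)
The plan is to prove the statement in three parts: \textbf{(a)} whenever the search in the second step returns a pair $((i,k),(i',k'))$ satisfying~\eqref{Eq:AttackCondition}, the fourth step outputs the correct secret key $\ev$; \textbf{(b)} a single pass through the attack finds such a pair with probability bounded below by an absolute constant; and \textbf{(c)} one pass makes $O(\tfrac1R 2^{\seedsize/2})$ queries and runs in time $O(2^{\seedsize/2})$. Since the attack repeats on failure, (b) makes the expected number of passes $O(1)$, and combining this with (c) gives the claimed expected running time; reading the bound with $\seedsize=\lambda$ then shows the scheme retains at most $\seedsize/2$ bits of security.

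For (a) I would first record the identity $x_{2}^{j}=\hh_{comm}(z_2^{j})=\hh_{comm}(E_2(\seed_{\yv}^{j}))$, which holds because $z_2^{j}=\pi^{j}(\yv^{j})=\pi^{j}\big((\pi^{j})^{-1}(E_2(\seed_{\yv}^{j}))\big)=E_2(\seed_{\yv}^{j})$. Hence the second commitment of a round is a deterministic function of $\seed_{\yv}^{j}$ alone, and an equality $x_{2,k}^{i}=x_{2,k'}^{i'}$ forces $\seed_{\yv,k}^{i}=\seed_{\yv,k'}^{i'}$ unless $E_2$ or $\hh_{comm}$ collides on two distinct inputs; since both codomains ($\F_q^n$ and $\zo^{2\lambda}$) have size far exceeding $2^{\seedsize}$, among the $N=q_s R=O(2^{\seedsize/2})$ inspected samples such an accidental collision is negligible. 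Granting $\seed_{\yv,k}^{i}=\seed_{\yv,k'}^{i'}$, the recovery is linear algebra: the $c_k^{i}=2$ opening reveals $\seed_{\pi,k}^{i}$ and $\yv_k^{i}+\ev$, giving $\pi_k^{i}=E_1(\seed_{\pi,k}^{i})$ and $\pi_k^{i}(\yv_k^{i}+\ev)$; the $c_{k'}^{i'}\neq2$ opening reveals $\seed_{\yv,k'}^{i'}$, whence $\pi_k^{i}(\yv_k^{i})=E_2(\seed_{\yv,k}^{i})=E_2(\seed_{\yv,k'}^{i'})$; and since coordinate permutations are linear, $\pi_k^{i}(\ev)=\pi_k^{i}(\yv_k^{i}+\ev)-\pi_k^{i}(\yv_k^{i})$, so $\ev=(\pi_k^{i})^{-1}(\pi_k^{i}(\ev))$.

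For (b) I would separate the randomness into the seeds $\seed^{j}\in\zo^{\seedsize}$ (which fix every commitment) and the oracle $\hh_{comm}$ defining the challenge vectors $c_k=\hh_{comm}(\widetilde{x}_k\|m_k)$. A birthday argument over the $N\ge 2^{\seedsize/2}$ uniform seeds shows that two rounds share a seed, hence satisfy $x_2^{j}=x_2^{j'}$, with probability at least $1-e^{-1/2}$. Conditioned on the seeds, the challenge coordinates are uniform over $\{1,2,3\}$ as functions of the oracle, and the two coordinates $c_k^{i},c_{k'}^{i'}$ at the colliding positions are independent and uniform whether or not $k=k'$ (two coordinates of one hash value, or of two independent ones); the pattern ``one equals $2$ and the other differs from $2$'', which is exactly~\eqref{Eq:AttackCondition} once the pair is ordered, then holds with probability $\tfrac49$. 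As these two events rely on disjoint sources of randomness, a single pass succeeds with probability at least a positive constant $c_0$.

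For (c), a pass makes $q_s=\lceil\tfrac1R 2^{\seedsize/2}\rceil$ queries and gathers $N=O(2^{\seedsize/2})$ values $x_2^{j}$; detecting a collision among them by sorting or hashing costs $\widetilde{O}(N)$, the challenge check is $O(N)$, and the final recovery is polynomial, so a pass runs in time $O(2^{\seedsize/2})$ after absorbing logarithmic factors. I expect the main obstacle to be the independence claim inside (b): one must verify that conditioning on ``the seeds collide at the two exploited positions'' does not skew the challenge values there. This is secured by the sampling order---seeds first, then the oracle---together with dismissing the degenerate case in which the same pair $(\widetilde{x},m)$ is queried twice (which would equate two challenge vectors) and confirming, as in (a), that the exploited collision is a genuine $\seed_{\yv}$-collision rather than an artefact of $E_2$ or $\hh_{comm}$.
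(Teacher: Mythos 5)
Your proof is correct and takes essentially the same approach as the paper's: a birthday bound over the $\geq 2^{\seedsize/2}$ seeds generated by the signing queries, combined with the constant probability ($4/9$, by independence of the Fiat--Shamir challenges from the seed randomness) that the colliding positions carry the required challenge pattern, giving constant success probability per pass. The paper's own proof states only this core argument, leaving your parts (a) and (c) — correctness of the key-recovery algebra and the per-pass cost, including the dismissal of accidental $E_2$/$\hh_{comm}$ collisions — implicit in the attack's description, so your write-up just makes explicit what the paper leaves to the reader.
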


\begin{proof}
	
	 We  prove that each iteration of the attack described above finds the secret key with a constant probability. Notice that for $q_S = \lceil\frac{1}{R}2^{\seedsize/2}\rceil$, the signing oracle generates $Rq_S  = R\lceil\frac{1}{R}2^{\seedsize/2}\rceil \geq 2^{\seedsize/2}$ random seeds $\seed_{\yv,k}^i \in \{0,1\}^{l_{seed}}$, $i \in [R]$, $k \in [q_S]$. There is a collision in the seeds, i.e. $\exists (i, k), \ (i', k') \text{ such that }(i,k) \neq (i', k') \ \wedge \ \seed_{\yv,k}^i = \seed_{\yv, k'}^{i'}$ wp. $\Omega(\frac{q_S^2}{2^{\seedsize}})$ and with constant probability, this happens wp. $c^i_k = 2$ and $c^{i'}_{k'} \neq 2$. So we proved that the probability that the adversary finds a pair of couples $((i,k),(i',k'))$ satisfying Equation~\ref{Eq:AttackCondition}, which proves our theorem.
\end{proof}

As a concrete example, assume we consider the above signature scheme with $\lambda = 128$ which is practice the most common choice. This means that there is an adversary performing $\frac{1}{R}2^{64} \le 2^{58}$ queries and running in time $2^{64}$ that will break the scheme with constant probability (here something around $\frac{1}{10}$), while the underlying syndrome decoding problem requires time $2^{128}$ to break. The attack clearly invalidates the claim of $128$ bits of security. Notice also that it is quite standard to allow up to $2^{64}$ calls to the signing oracle, this is for example what is requried by the NIST for the standardisation of post-quantum signature schemes so our attack fits these guidelines. 

Fortunately, this attack can be mitigated with the methods we present in the next section with a close to zero loss in the efficiency, so optimized Stern's schemes can still be used securly with these mild changes.

\section{The $\salt$ + index construction}~\label{Section;FullScheme}
There are $2$ easy ways of mitigating the above attack: by taking $\seedsize = 2\lambda$ or by using randomized commitments but both these solutions are very costly in terms of signature size. Here, we present another solution that will very mildly increase the signature size while preserving security: the $\salt$ + index construction. The idea is very simple: when running the identification scheme (the full $R$ round scheme), we pick a random $\salt \in \zo^{2\lambda}$ which we use as a randomization for all the commitments and the pseudo-random generators. Since the functions are used several times, we add an index which will be like an internal counter. Each time such a function is used, we increment the index and add it to the input, to ensure that each function used is independent from one another. 

\begin{openbox}[\textit{Stern's identification scheme (optimized, $R$ rounds), $\IS_{Stern,opt}$}]
	\textbf{Initialization.} Let $(\Hm,\ev,\sv = \Hm\ev) \xleftarrow[]{\$} \mathcal{D}_{n,k,w}$ \ref{Conjecture:AvgHardnessBSD}, $sk = \ev$, $pk = (\Hm,\sv)$. The prover obtains $(pk,sk)$ and the verifier obtains $pk$.\bigskip
	
	\textbf{Interaction.} 
	The prover picks a random $\salt \in \zo^{2\lambda}$ and initializes $index = 0$. Each time index is called, increment index.
	\begin{itemize}
		\item  For $j$ from $1$ to $R$: the prover picks a random $\seed^j \in \zo^{\seedsize}$ and lets $(\seed^j_{\pi},\seed^j_{\yv}) = E_0(\seed^j,\salt,index), \pi^j = E_1(\seed^j_\pi,\salt,index)$ and \\ $\yv^j = (\pi^j)^{-1}(E_2(\seed^j_{\yv}),\salt,index)$. He then computes $\tv^j = \Hm  \yv^j$, $z^j_1 = (\pi^j,\tv^i),z_2 = \pi^i(\yv^i), z^i_3 = \pi^i(\yv^i + \ev)$, and computes for $i \in \{1,2,3\}, \ x^j_i = \hh_{comm}(z^j_i,\salt,index)$.
	\end{itemize}
	The prover sends $\widetilde{x} = \hh_{\comm}(x_1^1,x_2^1,x_3^1,\dots,x_1^R,x_2^R,x_3^R,\salt,index)$ to the verifier as well as $\salt$.
	\begin{itemize}
		\item For $j$ from $1$ to $R$: the verifier sends a random challenge $c^j \xleftarrow[]{\$} \{1,2,3\}$ to the prover.\\
		
		\item For $j$ from $1$ to $R$: the prover reveals information so that the verifier can recover $x^j_i$ for each $i \in \{1,2,3\} \setminus c$. More precisely:
		\begin{enumerate}
			\item if $c^j = 1$, the prover sends $l^j_1 = (o^j_2,o^j_3)$ with $o^j_2 = \seed_{\yv}, \ o^j_3 = \pi^j(\ev)$ and $x^j_1$.
			\item if $c^j = 2$, the prover sends $l^j_2 = (o^j_1,o^j_3)$ with $o^j_1 = \seed_{\pi}, \ o^j_3 = \yv^j + \ev$ and $x^j_2$
			\item if $c^j = 3$, the prover sends $l^j_3 = \seed$ and $x_3^j$.
		\end{enumerate}
	\end{itemize}	
	\textbf{Verification.} For $j$ from $1$ to $R$: 
	\begin{enumerate}
		\item if $c^j =1$, recovers $z^j_2,z^j_3$ from $o^j_2,o^j_3$ by taking $z^j_2 = E_2(o^j_2,\salt,index)$ and $z^j_3 = z^j_2 + o^j_3$. Then for $i \in \{2,3\}$ checks that $\hh_{comm}(x^j_i,\salt,index) = z^j_i$ and that $wt(o^j_3) = w$.
		\item if $c^j = 2$, computes $ \pi^j = E_1(o^j_1,\salt,index), z^j_3 = \pi^j(o^j_3)$ then $\tv^j = \Hm o^j_3 - \sv$ and finally $z^j_1 = (\pi^j,\tv^j)$. Then for $i \in \{1,3\}$ checks that $\hh_{comm}(x^j_i,\salt,index) = z^j_i$.
		\item if $c^j = 3$, computes $(\seed^j_{\pi},\seed^j_y) = E_0(l^j_{3},\salt,index)$,  $\pi^j = E_1(\seed^j_\pi,\salt,index)$ and $z^j_2 =E_2(\seed^j_{\yv},\salt,index)$, followed by $\tv^j = \Hm (\pi^j)^{-1}(\yv^j)$. From there, he constructs $z^j_1 = (\pi^j,\tv^j)$.  Then for $i \in \{1,2\}$ checks that $\hh_{comm}(x^j_i,\salt,index) = z^j_i$.
	\end{enumerate}
	The verifier also checks that $\hh_{\comm}(x_1^1,x_2^1,x_3^1,\dots,x_1^R,x_2^R,x_3^R,\salt,index)$ is equal to the value $\widetilde{x}$ he received. In each function, is known what index is used by the prover (it doesn't depend on some internal randomness) so the verifier uses for each function the same index as the prover when performing the above.
\end{openbox}

First, notice that this mitigates the above attack. Indeed, the attack used the fact that collision in the seeds between different signatures led to collisions in the commitments which could be detected and used. Here, with the addition of the $\salt$, this won't happen anymore (and the index ensures that collisions within the same signature cannot be used). Moreover, notice that in terms of communication, a single string $\salt$ of size $2 \lambda$ is added so the increase in the signature size is $2\lambda$.

There could a priori be other attacks on this scheme. We show that this is not the case, by showing that Stern's scheme with the $\salt$ + index construction is multi-HVZK. Then, one can use the results of~\cite{GHHM21} to prove the EUF-CMA security of the resulting signature scheme.

\subsection{Zero-knowledge for $\IS_{Stern}$ without pseudo-random generators}\label{Section:multiHVZK}
We first prove a bound on the multi-HVZK advantage when we don't have seeds, but we still have deterministic commitments. We call this scheme $\IS_{DC,NoS,\salt}$, which is described as follows: \\ \\
\cadre{$\IS_{DC,NoS,\salt}$ scheme \\ \hline 
	\textbf{Initialization.} Let $(\Hm,\ev,\sv = \Hm\ev) \xleftarrow[]{\$} \mathcal{D}_{n,k,w}$ \ref{Conjecture:AvgHardnessBSD}, $sk = \ev$, $pk = (\Hm,\sv)$. The prover obtains $(pk,sk)$ and the verifier obtains $pk$.\bigskip
	
	\textbf{Interaction.} 
	\begin{itemize}
		\item  For $j$ from $1$ to $R$: The prover picks a random permutation $\pi^j$ acting on $\F_q^n$ and a random column vector $\yv^j \xleftarrow[]{\$} \F_q^n$, and then computes $\tv^j = \Hm  \yv^j$. The prover computes $z^j_1 = (\pi^j,\tv^j)$, $z^j_2 = (\pi^j(\yv^j))$, $z_3 = (\pi^j(\yv^j + \ev))$, and computes for $i \in \{1,2,3\}, \ x^j_i = \hh_{comm}(z^j_i||\salt,index)$.
	\end{itemize}
	The prover sends $\widetilde{x} = \hh_{\comm}(x_1^1,x_2^1,x_3^1,\dots,x_1^R,x_2^R,x_3^R,\salt,index)$ to the verifier.
	\begin{itemize}
		\item For $j$ from $1$ to $R$: the verifier sends a random challenge $c^j \xleftarrow[]{\$} \{1,2,3\}$ to the prover.
		
		\item For $j$ from $1$ to $R$: the prover reveals information so that the verifier can recover $x^j_i$ for each $i \in \{1,2,3\} \setminus c$. More precisely:
		\begin{enumerate}
			\item if $c^j = 1$, the prover sends $l^j_1 = (o^j_2,o^j_3)$ with $o^j_2 = \pi^j(\yv^j), \ o^j_3 = \pi^j(\ev)$ and $x^j_1$.
			\item if $c^j = 2$, the prover sends $l^j_2 = (o^j_1,o^j_3)$ with $o^j_1 = \pi^j, \ o^j_3 = \yv^j + \ev$ and $x^j_2$
			\item if $c^j = 3$, the prover sends $l^j_3 = (\pi^j,\yv^j)$ and $x_3^j$.
		\end{enumerate}
	\end{itemize}	
	\textbf{Verification.} For $j$ from $1$ to $R$: 
	\begin{enumerate}
		\item if $c^j =1$, he recovers $z^j_2,z^j_3$ from $l^j_1$. Then for $i \in \{2,3\}$ checks that $\hh_{comm}(x^j_i) = z^j_i$ and that $wt(o^j_3) = w$.
		\item if $c^j = 2$, he recover $z^j_1,z^j_3$ from $l^j_2$. Then for $i \in \{1,3\}$ checks that $\hh_{comm}(x^j_i) = z^j_i$.
		\item if $c^j = 3$, he recover $z^j_1,z^j_2$ from $l^j_3$. Then for $i \in \{1,2\}$ checks that $\hh_{comm}(x^j_i) = z^j_i$.
	\end{enumerate}
	The verifier also checks that $\hh_{\comm}(x_1^1,x_2^1,x_3^1,\dots,x_1^R,x_2^R,x_3^R,\salt,index)$ is equal to the value $\widetilde{x}$ he received. 
}

We prove a tight reduction from the advantage of breaking the $t$-HVZK condition and the advantage of solving the underlying syndrome decoding problem. 

\begin{theorem} ~\label{Theorem:4}
	For any adversary $\aa$, there exists a adversary $\bb$ st. 
	$$Adv^{\mathrm{t-HVZK}}_{\IS_{DC,NoS,\salt}}(\aa) \le Adv^{\SD}(\bb) + O\left(\frac{t^2}{2^{\saltsize}}\right).$$ 
	where $Adv^{\SD}(\bb)$ gives the probability that the adversary breaks the underlying syndrome decoding problem, and the running time of $\bb$ is the same as the running time of $\aa$.
\end{theorem}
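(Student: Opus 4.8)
The plan is to proceed in the random oracle model for $\hh_{comm}$, via two game hops followed by a reduction to syndrome decoding. I would start from the real distribution $trans_{\IS_{DC,NoS,\salt}}$ and first deal with the salts. Since each of the $t$ transcripts draws an independent $\salt \in \zo^{2\lambda}$ and resets its internal $index$ to $0$, a birthday bound shows that all $t$ salts are pairwise distinct except with probability $O(t^2/2^{\saltsize})$. Conditioned on this event, every call to $\hh_{comm}$ made across all $t$ transcripts uses a distinct pair $(\salt,index)$, so in the random oracle model every commitment is an independent fresh value; this is precisely the role of the index within a transcript and of the salt between transcripts. This first hop costs $O(t^2/2^{\saltsize})$ and lets me assume distinct salts from here on.

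For the second hop I would take the simulator of Proposition~\ref{Proposition:1roundHVZK}, applied independently to each round of each transcript: for the drawn challenge it produces the two revealed values $z_{c'},z_{c''}$ from their correct marginal distribution (e.g.\ random $z_2$ with $z_3-z_2\in S_w$ when $c=1$, and the analogous choices otherwise), commits to them honestly through $\hh_{comm}$, replaces the single unrevealed commitment $x_c$ by a uniform string, and forms $\widetilde{x}$ as the honest prover would. By Proposition~\ref{Proposition:1roundHVZK} the revealed values have exactly the real-world distribution and their commitments are a deterministic function of them, so the only discrepancy between the two worlds is that each unrevealed $x_c$ equals $\hh_{comm}(z_c,\salt,index)$ in the real world and is uniform in the simulated one. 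Because the input $z_c$ is never revealed and $(\salt,index)$ is unique, this value is in both worlds uniform and independent of $\aa$'s view until $\aa$ queries $\hh_{comm}$ at $(z_c,\salt,index)$. I would then invoke the identical-until-bad principle: the two worlds are perfectly indistinguishable unless the event $\mathrm{BAD}$ occurs that $\aa$ queries the oracle at some genuine unrevealed value, so the conditional distinguishing advantage is at most $\Pr[\mathrm{BAD}]$.

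The heart of the argument is to bound $\Pr[\mathrm{BAD}]$ by $Adv^{\SD}(\bb)$. I would let $\bb$ receive an instance $(\Hm,\sv)\Unif\mathcal{D}_{n,k,w}$, set $pk=(\Hm,\sv)$, and run $\aa$ on $t$ simulated transcripts; crucially the simulator needs no secret key, so $\bb$ can produce them. For every oracle query $(z,s,idx)$ that $\aa$ makes, $\bb$ checks whether $(s,idx)$ matches an unrevealed slot of some transcript; if so it interprets $z$ as a guess of $z_c$ and, together with the two revealed values of that slot, extracts a candidate error vector. The extraction is uniform across the three cases: one reads the permutation $\pi$ (from the queried $z_1$ when $c=1$, from the revealed $z_1$ when $c\in\{2,3\}$), sets $\ev=\pi^{-1}(z_3-z_2)$ where one of $z_2,z_3$ is the queried value and the rest are revealed, and outputs $\ev$ if $\Hm\ev=\sv$ and $\wt{e}=w$. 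A short computation per challenge, using $z_3-z_2=\pi(\ev)$ together with $\Hm\pi^{-1}(z_2)=\tv$ and $\Hm\pi^{-1}(z_3)=\tv+\sv$, shows that the genuine unrevealed value always yields a vector of weight exactly $w$ with syndrome exactly $\sv$; hence whenever $\mathrm{BAD}$ occurs $\bb$ recovers a valid solution. Since the simulated and real games are identical until $\mathrm{BAD}$, the probability of $\mathrm{BAD}$ is the same in both, giving $\Pr[\mathrm{BAD}]\le Adv^{\SD}(\bb)$. Combining the two hops yields $Adv^{\mathrm{t-HVZK}}_{\IS_{DC,NoS,\salt}}(\aa)\le Adv^{\SD}(\bb)+O(t^2/2^{\saltsize})$, with $\bb$ running in essentially the same time as $\aa$ up to the per-query bookkeeping.

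The step I expect to be the main obstacle is the reduction, for two linked reasons. First, $\bb$ runs the simulated game, in which there is no real unrevealed value and each $x_c$ is a meaningless random string, yet it must argue about a query that would be ``the real $z_c$'' in the honest world; making the identical-until-bad coupling rigorous, so that the very query on which the distinguisher would succeed in the real world is the one on which $\bb$ extracts a solution in the simulated world, is the delicate point. Second, one must check that the extracted $\ev$ has weight exactly $w$ and syndrome exactly $\sv$ in each of the three challenge cases, rather than merely being consistent with the revealed values; this is where the permutation-invariance of the weight and the verification relations turn ``guessing the hidden commitment'' into ``solving syndrome decoding''. The salt-collision bound and the bookkeeping for the global commitment $\widetilde{x}$ are, by comparison, routine.
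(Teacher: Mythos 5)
Your proposal is correct and follows essentially the same route as the paper's proof: the same game sequence (real transcripts, then openings sampled from their public-key-only marginals via Proposition~\ref{Proposition:1roundHVZK}, then unrevealed commitments replaced by uniform strings), the same $O(t^2/2^{\saltsize})$ salt-collision bound, the same random-oracle ``distinguishing requires querying the hidden $z_c$'' argument, and the same reduction in which each oracle query is combined with the revealed openings to extract a candidate syndrome-decoding solution. Your write-up is in fact somewhat more careful than the paper's (explicit identical-until-bad coupling, explicit per-challenge extraction and weight/syndrome checks), but it is the same proof.
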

Before presenting a proof of this theorem, let us perform a small discussion on how to interpret this theorem. In the previous section, we started from a signature scheme st. the underlying syndrome decoding problem requires $2^\lambda$ time to break, and we presented an adversary $\aa$ running in time $O(2^{\lambda/2})$ that breaks the signature scheme arising from $\IS_{Stern,opt}$ with $t = O(2^{\lambda/2})$ signature queries. One can also interpret our attack as an adversary $\aa$ st.  running in time $O(2^{\lambda/2})$ st. $Adv^{\mathrm{t-HVZK}}_{\IS_{Stern,opt}}(\aa)$ is close to $1$ with $t = 2^{\lambda/2}$. 

Theorem~\ref{Theorem:4} that we will shortly prove, shows that such an attack cannot be possible when in the case where there is salt. Indeed, assume we also had an adversary $\aa$ running in time $2^{\lambda}$ st. $Adv^{\mathrm{t-HVZK}}_{\IS_{Stern,opt}}(\aa)$ is close to $1$. Then the above theorem immediately says that there is an adversary $\bb$ running in time $2^{\lambda/2}$ that breaks the underlying syndrome decoding problem but we chose the parameters st. such an adversary would require time at least $2^{\lambda}$ hence the contradiction.

We now present the proof of this theorem. 
\begin{proof}
	We will do a game based proof. The first game we consider is 
\begin{center}
	\begin{minipage}{0.99\textwidth}
		\cadre{\vspace{0.1cm} \game 0 Real transcript \\
			\hline
			$(pk,sk) \leftarrow \ISKeygen$. \\
			$\forall i \in \{1,\dots,t\}:$ 
			\begin{itemize}
				\item Pick a random $\salt^i \Unif \zo^{\saltsize}$.
				\item $\forall j \in \{1,\dots,R\}$:
			\begin{itemize}
				\item Pick a random permutation $\pi^{ij}$ and a random string $y^{ij} \in \F_q^n$. Let $\tv^{ij} = \Hm \yv^{ij}$. Let $z^{ij}_1 = (\pi^{ij},\tv^{ij})$, $z^{ij}_2 = (\pi^{ij}(\yv^{ij}))$, $z^{ij}_3 = (\pi^{ij}(\yv^{ij} + \ev))$, and compute for $k \in \{1,2,3\}, \ x^{ij}_k = \hh_{comm}(z^{ij}_k||\salt^i,index)$. Let also 
				$$ \widetilde{x}^i = \hh_{\comm}(x_1^{i1},x_2^{i1},x_3^{i1},\dots,x_1^{iR},x_2^{iR},x_3^{iR},\salt^i,index).$$
				Pick a random $c^{ij} \Unif \{1,2,3\}$ 
				\begin{enumerate}
					\item If $c^{ij} = 1$, set $l^{ij}_1 = (o^{ij}_2,o^{ij}_3)$ with $o^{ij}_2 = \pi^{ij}(\yv^{ij}), \ o^{ij}_3 = \pi^{ij}(\ev)$.
					\item If $c^{ij} = 2$, set $l^{ij}_2 = (o^{ij}_1,o^{ij}_3)$ with $o^{ij}_1 = \pi^{ij}, \ o^{ij}_3 = \yv^{ij} + \ev$.
					\item If $c ^{ij}= 3$, set $l^{ij}_3 = (o^{ij}_1,o^{ij}_2)$ with $o^{ij}_1 = \pi^{ij}, \ o^{ij}_2 = \yv^{ij}$.
				\end{enumerate} 
					\item $\tau^i = \left(\widetilde{x}^i,\salt^i,x_{c^{i1}}^{i1},\dots,x_{c^{iR}}^{iR},l_{c^{i1}}^{i1},\dots,l_{c^{iR}}^{iR}\right).$
			\end{itemize}

			\end{itemize}
			$\aa$ wins the game iff. $\aa(\tau^{1},\dots,\tau^{t},pk) = 1$.}
	\end{minipage} 
\end{center}

In this game, we construct $t$ transcripts constructed according to $trans_{\IS_{Stern,DC,NoS}}(pk,sk)$.  The game is won if the adversary outputs $1$ given these real transcripts as input. 

We now consider another game where we construct the same transcripts but in a different way. For each choice of $c^{ij}$, it is  possible to construct the responses $l^{ij}$ independently of the secret key and to make only the strings $x^{ij}_{c^{ij}}$ dependent of the secret key. Let $S_w$ be set the of words of weight $w$.

\begin{center}
	\begin{minipage}{0.99\textwidth}
		\cadre{\vspace{0.1cm} \game 1 Real transcript with openings depending only on $pk$ \\
			\hline
			$(pk,sk) \leftarrow \ISKeygen$. \\
			$\forall i \in \{1,\dots,t\}:$ 
			\begin{itemize}
				\item Pick a random $\salt^i \Unif \zo^{\saltsize}$.
				\item $\forall j \in \{1,\dots,R\}$:
				\begin{itemize}
					\item Pick a random challenge $c^{ij} \Unif \{1,2,3\}$.
				\begin{enumerate}
					\item If $c^{ij} = 1$, set $l^{ij}_1 = (o^{ij}_2,o^{ij}_3)$ with $o^{ij}_2 \Unif \F_q^n$ and $ o^{ij}_3 \Unif S_w$.  Pick a random $\pi^{ij}$ st. $\pi^{ij}(o_3^{ij}) = \ev$, and let $x^{ij}_1 = \hh_{comm}\left(\pi^{ij},\tv^{ij} = \Hm \pi^{ij}(o_2^{ij}),\seed,index\right)$.
					\item If $c^{ij} = 2$, set $l^{ij}_2 = (o^{ij}_1,o^{ij}_3)$ with $o^{ij}_1 \Unif \Perm_n, \ o^{ij}_3 \Unif \F_q^n$. Pick $x_2^{ij} = \hh_{comm}\left(\pi^{ij}(\yv^{ij}),\salt^i,index\right)$.
					\item If $c ^{ij}= 3$, set $l^{ij}_3 = (o^{ij}_1,o^{ij}_2)$ with $o^{ij}_1 \Unif \Perm_n, \ o^{ij}_2 \Unif \F_q^n$. Pick $x_3^{ij} = \hh_{comm}\left(\pi^{ij}(\yv^{ij} + \ev),\salt^i,index\right)$.
					\end{enumerate}
 Let also 
					$$ \widetilde{x}^i = \hh_{\comm}(x_1^{i1},x_2^{i1},x_3^{i1},\dots,x_1^{iR},x_2^{iR},x_3^{iR},\salt^i,index).$$

					\item $\tau^i = \left(\widetilde{x}^i,\salt^i,x_{c^{i1}}^{i1},\dots,x_{c^{iR}}^{iR},l_{c^{i1}}^{i1},\dots,l_{c^{iR}}^{iR}\right).$
				\end{itemize}
			\end{itemize}
			$\aa$ wins the game iff. $\aa(\tau^{1},\dots,\tau^{t},pk) = 1$.}
	\end{minipage} 
\end{center}

\begin{lemma}
	$$\Pr[\aa \textrm{ wins } \game 0] = \Pr[\aa \textrm{ wins } \game 1].$$
\end{lemma}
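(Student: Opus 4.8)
The plan is to show that Game~0 and Game~1 produce the tuple $(\tau^1,\dots,\tau^t)$ with \emph{identical} distributions, so that the two winning probabilities are exactly equal. Since the salts $\salt^i$ and the challenges $c^{ij}$ are drawn uniformly and independently in both games, I would condition on these values and reduce the claim to the following local statement: for every fixed $i,j$ and every fixed value of $c^{ij}$, the joint distribution of the revealed opening $l^{ij}_{c^{ij}}$ together with the hidden committed value $z^{ij}_{c^{ij}}$ is the same in both games. Everything else appearing in a transcript --- the revealed commitments $x^{ij}_k$ for $k \neq c^{ij}$ (which are recomputed from the openings), the hidden commitment $x^{ij}_{c^{ij}} = \hh_{comm}(z^{ij}_{c^{ij}},\salt^i,index)$, and the global commitment $\widetilde{x}^i$ --- is one fixed deterministic function (through the common function $\hh_{comm}$) of the openings, the hidden value, and the public data. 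Hence matching these local laws forces the whole transcript distribution to coincide.

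I would then do the case analysis on $c^{ij}$. The cases $c^{ij}=2$ and $c^{ij}=3$ are essentially immediate: in Game~0 one has $\pi^{ij}$ uniform and $\yv^{ij}$ uniform and independent, and the revealed data is $(\pi^{ij},\yv^{ij}+\ev)$, resp.\ $(\pi^{ij},\yv^{ij})$; since $\yv \mapsto \yv+\ev$ is a bijection of $\F_q^n$, these pairs are uniform on $\Perm_n \times \F_q^n$, exactly as sampled directly in Game~1, and the hidden value ($\pi^{ij}(\yv^{ij})$, resp.\ $\pi^{ij}(\yv^{ij}+\ev)$) is the \emph{same} deterministic function of the revealed pair and of the fixed $\ev$ in both games. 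The only genuinely nontrivial case is $c^{ij}=1$, where the revealed data is $(o_2,o_3)=(\pi^{ij}(\yv^{ij}),\pi^{ij}(\ev))$ and the hidden value is $z_1=(\pi^{ij},\Hm\yv^{ij})$, i.e.\ a function of $(\pi^{ij},o_2)$ via $\yv=(\pi^{ij})^{-1}(o_2)$.

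For that case I would isolate the key distributional fact: when $\pi$ is a uniform permutation, the pair $(\pi,\pi(\ev))$ has exactly the same law as first sampling $o_3$ uniformly from the orbit $\{\pi(\ev):\pi\in\Perm_n\}=S_w$ and then sampling $\pi$ uniformly among the permutations consistent with it. This is the statement that the orbit map $\pi\mapsto\pi(\ev)$ pushes the uniform measure on $\Perm_n$ to the uniform measure on the (transitive) orbit, with uniform fibres --- an orbit--stabiliser argument. Together with the observation that $o_2=\pi(\yv)$ is uniform on $\F_q^n$ independently of $(\pi,o_3)$ (because $\yv$ is uniform and independent of $\pi$, so $o_2\mid\pi$ is uniform for every $\pi$), this shows that $(\pi^{ij},o_2,o_3)$ has the same distribution in both games; since $l_1=(o_2,o_3)$ and $z_1$ are both determined from $(\pi^{ij},o_2,o_3)$, the local laws agree.

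Finally I would assemble the pieces: matching the local laws for all $i,j$, and recalling that $\salt^i$ and the $c^{ij}$ are identically distributed in the two games, the products over $j$ and over $i$ give identical distributions for each $\tau^i$ and hence for the full tuple, yielding $\Pr[\aa \textrm{ wins } \game 0] = \Pr[\aa \textrm{ wins } \game 1]$. The main obstacle is precisely the $c^{ij}=1$ case and the care required to justify the ``resample the orbit element, then resample the permutation'' rewriting; here one must check that the decomposition (type) of $\ev$ is preserved, so that $o_3$ genuinely ranges uniformly over $S_w$, and that the fibres of the orbit map carry the uniform measure.
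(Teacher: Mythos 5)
Your proof is correct and follows essentially the same route as the paper: a case analysis on the challenge value showing that Game~1's rewritten sampling induces exactly the distribution of Game~0's honest sampling, with everything else in the transcript (revealed commitments, hidden commitments, and $\widetilde{x}^i$) reconstructed as a fixed deterministic function through $\hh_{comm}$. Where you go beyond the paper is worth noting: the paper's proof is a two-line deferral to Proposition~\ref{Proposition:1roundHVZK}, but that proposition only matches the distribution of the \emph{revealed} strings $z_{c'},z_{c''}$ --- there the unopened commitment was a commitment to $\mathbf{0}$, and its indistinguishability was charged to the hiding property, an argument unavailable here. In the present lemma Game~1 computes the unopened commitment honestly from the resampled openings and $sk$, so exact equality of the games requires matching the \emph{joint} law of the openings together with the hidden committed value; this is precisely what your orbit--stabiliser argument for $(\pi,\pi(\ev))$ in the case $c^{ij}=1$ supplies (and your caveat that the orbit of $\ev$ must equal $S_w$ is satisfied in the binary Hamming setting in which the scheme is stated). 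So your write-up is a more rigorous rendering of the paper's argument rather than a different one, and it closes the small gap left by the paper's citation of the single-transcript HVZK proof.
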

\begin{proof}
	This is true since the distribution of each $\tau^{i}$ is the same for both games. This can be seen using the same arguments as in Proposition~\ref{Proposition:1roundHVZK}, where we proved that the ideal simulator has the same distribution of openings as the real transcript. 
\end{proof}

We now change $\game 1$ into $\game2$ where we replace the commitments $x^{ij}$ with uniformly random strings. We put the difference between the games in red.

\begin{center}
	\begin{minipage}{0.99\textwidth}
		\cadre{\vspace{0.1cm} \game 2 Simulated transcript \\
			\hline
			$(pk,sk) \leftarrow \ISKeygen$. \\
			$\forall i \in \{1,\dots,t\}:$ 
			\begin{itemize}
				\item Pick a random $\salt^i \Unif \zo^{\saltsize}$.
				\item $\forall j \in \{1,\dots,R\}$:
				\begin{itemize}
					\item Pick a random challenge $c^{ij} \Unif \{1,2,3\}$.
					\begin{enumerate}
						\item If $c^{ij} = 1$, set $l^{ij}_1 = (o^{ij}_2,o^{ij}_3)$ with $o^{ij}_2 \Unif \F_q^n$ and $ o^{ij}_3 \Unif S_w$. 
						\item If $c^{ij} = 2$, set $l^{ij}_2 = (o^{ij}_1,o^{ij}_3)$ with $o^{ij}_1 \Unif \Perm_n, \ o^{ij}_3 \Unif \F_q^n$.
						\item If $c ^{ij}= 3$, set $l^{ij}_3 = (o^{ij}_1,o^{ij}_2)$ with $o^{ij}_1 \Unif \Perm_n, \ o^{ij}_2 \Unif \F_q^n$. 
					\end{enumerate}
					\item \textcolor{red}{Let $x^{ij}_{c^{ij}} \Unif \zo^{\commsize}$.}
					Let also 
					$$ \widetilde{x}^i = \hh_{\comm}(x_1^{i1},x_2^{i1},x_3^{i1},\dots,x_1^{iR},x_2^{iR},x_3^{iR},\salt^i,index).$$
					
					\item $\tau^i = \left(\widetilde{x}^i,\salt^i,x_{c^{i1}}^{i1},\dots,x_{c^{iR}}^{iR},l_{c^{i1}}^{i1},\dots,l_{c^{iR}}^{iR}\right).$
				\end{itemize}
			\end{itemize}
			$\aa$ wins the game iff. $\aa(\tau^{1},\dots,\tau^{t},pk) = 1$.}
	\end{minipage} 
\end{center}

\begin{lemma} We have
	$$ \left|\Pr[\aa \textrm{ wins } \game 2] - \Pr[\aa \textrm{ wins } \game 1]\right| \le Adv^{\SD}(\bb) + \Theta(\frac{t^2}{2^{\saltsize}}).$$
	where $\bb$ is an adversary st. $|\bb| \approx |\aa|$ and $Adv^{\SD}(\bb) $ is the probability that $\bb$ breaks the underlying syndrome decoding problem.
\end{lemma}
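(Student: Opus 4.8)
The plan is to work in the random oracle model for $\hh_{comm}$ and to isolate the single place where \game 1 and \game 2 differ: in \game 1 each transcript contains the commitment $x^{ij}_{c^{ij}} = \hh_{comm}(z^{ij}_{c^{ij}},\salt^i,index)$ to the \emph{unrevealed} value, whereas in \game 2 this slot is replaced by a uniform string in $\zo^{\commsize}$. Since $\hh_{comm}$ is modelled as a random function, the image of an input that $\aa$ never queries is itself uniform and independent of $\aa$'s view, so the two games can only be told apart if $\aa$ actually queries one of these hidden inputs. I would formalise this with two bad events: $\mathsf{SaltColl}$, that two of the $t$ salts $\salt^1,\dots,\salt^t$ coincide, and $\mathsf{Guess}$, that $\aa$ queries $\hh_{comm}$ at some hidden point $(z^{ij}_{c^{ij}},\salt^i,index)$. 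Conditioned on $\neg\mathsf{SaltColl}\wedge\neg\mathsf{Guess}$ the two experiments are identically distributed, so
$$\left|\Pr[\aa \textrm{ wins } \game 2]-\Pr[\aa \textrm{ wins } \game 1]\right|\le \Pr[\mathsf{SaltColl}]+\Pr[\mathsf{Guess}].$$

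For the first term, a birthday argument gives $\Pr[\mathsf{SaltColl}]=O(t^2/2^{\saltsize})$. The role of distinct salts is exactly to guarantee independence of the hidden outputs: within a single transcript the incremented $index$ already makes every call to $\hh_{comm}$ use a distinct input, and across transcripts distinct salts do the same, so under $\neg\mathsf{SaltColl}$ all of the challenger's $\hh_{comm}$-inputs are pairwise distinct and each hidden commitment is a fresh, independent uniform string — which is what justifies the indistinguishability claim above.

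The heart of the proof is bounding $\Pr[\mathsf{Guess}]$ by $Adv^{\SD}(\bb)$. I would build $\bb$ so that, on input an SD instance $(\Hm,\sv)$, it sets $pk=(\Hm,\sv)$ and runs $\aa$ while simulating \game 2, which needs no secret key: the revealed openings are sampled exactly as in \game 2, the hidden commitments are uniform strings, and the revealed commitments (needed to form each $\widetilde{x}^i$) are recomputed from the openings — crucially, in the $c=2$ branch the revealed syndrome is $\tv=\Hm o_3-\sv$, computable from $pk$ alone. While answering $\aa$'s random-oracle queries lazily, $\bb$ inspects every query landing in a hidden slot $(\cdot,\salt^i,index)$ and, for each queried value $z$, extracts a candidate secret from $z$ together with the public openings according to the challenge: for $c=1$ it sets $\ev'=\pi'(o_3)$ with $z=(\pi',\cdot)$; for $c=2$, $\ev'=o_3-\pi^{-1}(z)$ with $\pi=o_1$; and for $c=3$, $\ev'=\pi^{-1}(z)-o_2$ with $\pi=o_1$. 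It outputs the first $\ev'$ with $\Hm\ev'=\sv$ and $wt(\ev')=w$. A short computation (using $\pi(o_3)=\ev$, $o_3=\yv+\ev$, $o_2=\yv$ respectively) shows that on the true hidden value these formulas return exactly the secret $\ev$, so a query realising $\mathsf{Guess}$ always yields a valid SD solution; since $\bb$ brute-checks every hidden-slot query it succeeds whenever $\mathsf{Guess}$ occurs.

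The step I expect to be most delicate is making this last argument quantitatively tight, i.e. that $\Pr[\mathsf{Guess}]$ is preserved between the real \game 1 and the key-less \game 2 simulation that $\bb$ actually runs. This requires a coupling in which the two experiments agree until the first query to a true hidden point: up to that moment the hidden commitment is an un-queried (hence uniform, independent) oracle value in \game 1 and an honestly uniform value in \game 2, so $\aa$'s behaviour — and thus its probability of producing such a query — is identical in both. Since $\bb$ does not know which hidden value is the "true" one, the argument must rely on $\bb$ testing \emph{all} hidden-slot queries and on the verification $\Hm\ev'=\sv,\ wt(\ev')=w$ to filter out spurious candidates; checking that this filtering never discards the correct extraction on the $\mathsf{Guess}$ event, and that $\bb$'s running time stays $\approx|\aa|$, completes the reduction.
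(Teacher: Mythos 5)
Your proposal is correct and follows essentially the same route as the paper's proof: bound the distinguishing advantage by the salt-collision probability plus the probability that $\aa$ queries the hidden commitment input $(z^{ij}_{c^{ij}},\salt^i,index)$, then convert any such query into an SD solver $\bb$ that monitors $\aa$'s random-oracle queries and extracts the secret key from the queried value together with the publicly revealed openings. Your write-up is in fact more explicit than the paper's on two points it leaves implicit — that $\bb$ must run $\aa$ inside a simulation of $\game 2$ (which requires no secret key, with the SD instance embedded as $pk$), and that a coupling argument justifies measuring $\Pr[\mathsf{Guess}]$ in that simulation — but the underlying argument is the same.
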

\begin{proof}
	Consider an adversary $\aa$ for $\game 2$. This adversary can perform queries $\hh_{\comm}(\cdot)$. The only difference between $\game 1$ and $\game 2$ is in the strings $x^{ij}_{c^{ij}}$. In $\game 1$, if all the salts $\seed_{salt}^i$ are pairwise distinct,  $x^{ij}_{c^{ij}} = \hh_{\comm}(z^{ij}_{c^{ij}},\salt^i,index)$ is queried only once to while $\hh_{\comm}(*,\salt^i,index)$  is not queried in $\game 2$. The probability that all the salts are different is $\Theta(\frac{t^2}{2^{\saltsize}})$.
	
	 $\aa$ can query this function and the only way the output distribution of $\aa$ is different in $\game 2$ and in $\game 1$ is $\aa$ queries exactly $z^{ij}_{c^{ij}}$.
	 This means 
	$$ \left|\Pr[\aa \textrm{ wins } \game 2] - \Pr[\aa \textrm{ wins } \game 1]\right| \leq \Pr[\exists i,j \ st. \ \aa \textrm{ queries } x^{ij}_{c^{ij}}] + \Theta(\frac{t^2}{2^{\saltsize}}).$$
	
	Moreover, from the openings $o^{ij}$, the adversary can recover the $2$ other strings $z^{ij}_{c'},z^{ij}_{c''}$ for $c' \neq c''$ and $c',c'' \neq c^{ij}$. From the $3$ strings $z^{ij}_1, \ z^{ij}_2, z^{ij}_3$ one can recover in polynomial time the secret key $sk$ using some extraction algorithm (we described how to do this in Section~\ref{Section:Attack} for example).  We can therefore construct from $\aa$ an algorithm $\bb$ that solves the syndrome decoding problem. \\
	
	Algorithm $\bb$: run $\aa$. Each time a query $\hh_{\comm}(\cdot)$ is done (and the values $\seed_{salt}^i$ and $c^{ij}$ are publicly known from the transcript), consider the responses $z^{ij}$ from the transcript. From them, one can recover the values $x^{ij}_{c'},x^{ij}_{c''}$ for $c' \neq c''$ and $c',c'' \neq c^{ij}$, and check, using the extractor $Ext$ and by arranging the inputs depending on the value of $c^{ij}$, whether from $Ext(x^{ij}_{c'},x^{ij}_{c''})$ one can recover the secret key $sk$. Since $Ext$ is efficient, the running time of $\bb$ is essentially the running time of $\aa$. This means that 
	$$\Pr[\exists i,j \ st. \ \aa \textrm{ queries } x^{ij}_{c^{ij}}] \le Adv^{\SD}(\bb)$$
	with $|\bb| \approx |\aa|$.\\

Putting the $2$ inequalities together, we have 
	$$ \left|\Pr[\aa \textrm{ wins } \game 2] - \Pr[\aa \textrm{ wins } \game 1]\right| \le Adv^{\SD}(|\aa|) + \Theta(\frac{t^2}{2^{\saltsize}}).$$
\end{proof}

Finally, notice that the $\tau^{ij}$ constructed in $\game 2$ follows exactly the distribution of simulated transcripts.  This means 

\begin{align*}
	Adv^{\mathrm{t-HVZK}}_{\IS_{DC,NoS,\salt}}(\aa)  = \left|\Pr[\aa \textrm{ wins } \game 2] - \Pr[\aa \textrm{ wins } \game 0]\right|.
\end{align*}
Putting everything together, we obtain 
$$Adv^{\mathrm{t-HVZK}}_{\IS_{DC,NoS,\salt}}(\aa) \le Adv^{\SD}(|A|) + O\left(\frac{t^2}{2^{\saltsize}}\right).$$ 
\end{proof}

\subsection{Zero-knowledge for $\IS_{Stern}$ using pseudo-random generators}
Here, we show that using pseudo-random strings instead of random strings $\pi,\yv$ doesn't change the security of the scheme. Recall that we showed in the previous section that if one isn't careful using seeds of size $\lambda$ and deterministic commitments, there are attacks on the identification scheme. We show here that our $\salt$ + $index$ construction. In the ROM, we have the following:
\begin{proposition}\label{Proposition:SeedsPartial}
	Consider a random function $E : \zo^{\seedsize} \rightarrow I$ modeled  as a random function in the random oracle model, and consider the following distinguishing game:
	\begin{center}
		\begin{minipage}{0.96\textwidth}
			\cadre{
				$\game Indist(E)$ \\
				\hline 
				$b \Unif \zo$. \\
				If $b = 0$, $\seed \Unif \zo^{\seedsize}, y = E(\seed)$. \\
				If $b = 1$, $y \Unif I$. \\
				$b' \leftarrow A(y)$. \\
				The game is won iff. $b = b'$.
			}
		\end{minipage}
	\end{center}
	Consider an adversary $\aa$ that performs $q$ queries to $E$. Then 
	$$ \Pr[\aa \textrm{ wins } \game Indist(E)] \le \frac{1}{2} + O(\frac{q}{2^{\seedsize}}).$$
\end{proposition}
Moreover, the above proposition is true even if $\aa$ has some auxiliary information, \emph{as long as it is independent of the function $E$}\footnote{This last part is what motivates the use of the salt + index construction for seeds.}.

\begin{proposition}
	Consider the $2$ following games
	\begin{center}
		\begin{minipage}{0.99\textwidth}
			\cadre{\vspace{0.1cm} \game 0 Real transcript \\
				\hline
				$(pk,sk) \leftarrow \ISKeygen(1^\lambda)$. \\
				$\forall i \in \{1,\dots,t\}, \tau^1,\dots,\tau^t\leftarrow trans_{\IS_{DC,NoS,\salt}}(pk,sk)$. \\
				$\aa$ wins the game iff. $\aa(\tau^{1},\dots,\tau^{t}) = 1$.}
		\end{minipage} 
	\end{center}
	
	Notice that this game is exactly $\game 0$ from Section~\ref{Section:multiHVZK}.  Consider the second game
	
	\begin{center}
		\begin{minipage}{0.99\textwidth}
			\cadre{\vspace{0.1cm} \game 0' Real transcript with seeds\\
				\hline
				$(pk,sk) \leftarrow \ISKeygen$. \\
				$\forall i \in \{1,\dots,t\}, \tau^1,\dots,\tau^t\leftarrow trans_{\IS_{Stern,opt}}(sk)$. \\
				$\aa$ wins the game iff. $\aa(\tau^{1},\dots,\tau^{t}) = 1$.}
		\end{minipage} 
	\end{center}

where $\IS_{Stern,opt}$ is the full optimized scheme with seeds and salt described in Section~\ref{Section;FullScheme}.
	
	For any adversary performing $q$ queries to $E$, we have
	$$ \Pr[\aa \ { \wins } \ \game 0] - \Pr[\aa \ { \wins } \ \game 0'] \le O(\frac{q}{2^{\seedsize}}) + O(\frac{t^2}{2^{\saltsize}}).$$
\end{proposition}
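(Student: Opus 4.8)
The plan is to interpolate between $\game 0'$, where the permutations and vectors are expanded from seeds through the random oracles $E_0,E_1,E_2$, and $\game 0$, where those same objects are sampled uniformly at random. The two error terms come from two independent sources: a birthday collision among the $t$ salts, accounting for $O(t^2/2^{\saltsize})$, and a sequence of hybrid swaps of one pseudo-random object at a time, accounting for $O(q/2^{\seedsize})$, each swap being controlled by Proposition~\ref{Proposition:SeedsPartial}.

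First I would condition on the event that the salts $\salt^1,\dots,\salt^t \Unif \zo^{\saltsize}$ are pairwise distinct; by a union bound over the $\binom{t}{2}$ pairs this event fails with probability $O(t^2/2^{\saltsize})$. On this event the index counter additionally guarantees that no $(\salt^i,index)$ pair is ever repeated, so each call $E_k(\,\cdot\,,\salt^i,index)$ addresses a slice of the random oracle that the honest generation queries exactly once in the whole experiment and that is independent of every other slice. This is precisely the setting in which the auxiliary-information form of Proposition~\ref{Proposition:SeedsPartial} applies: everything else in the view of $\aa$ (the public key, the other rounds, the other signatures, the commitments and the revealed openings) is independent of the slice under consideration.

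Next I would define a chain of hybrids $H_0,\dots,H_N$ with $H_0 = \game 0'$ and $H_N = \game 0$, where passing from $H_{m-1}$ to $H_m$ replaces a single hidden pseudo-random object $E_k(\seed,\salt^i,index)$ by a uniformly random element of its range (a uniform permutation for $E_1$, a uniform vector for $E_2$), leaving the rest of the transcript distribution untouched. By Proposition~\ref{Proposition:SeedsPartial}, any distinguisher between $H_{m-1}$ and $H_m$ that issues $q_m$ queries to that particular slice has advantage $O(q_m/2^{\seedsize})$. Telescoping over the hybrids and using that each of the $q$ queries of $\aa$ hits at most one of the disjoint slices, so that $\sum_m q_m \le q$, bounds the total contribution of the seeds by $O(q/2^{\seedsize})$.

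The step I expect to require the most care is the treatment of the seeds that are \emph{revealed} as part of an opening: the full seed $\seed$ when $c^{ij}=3$, the seed $\seed_\yv$ when $c^{ij}=1$, and $\seed_\pi$ when $c^{ij}=2$. For these the adversary learns the seed and can recompute its expansion, so the corresponding object cannot be re-randomized and the hybrid must leave it in pseudo-random form, consistent with the revealed seed. Concretely, the randomization in the hybrids may be applied only to the seed-derived quantities that remain secret, and it is exactly here that the independence of the auxiliary information granted by the salt + index construction is indispensable: without it, the reuse of a seed slice across signatures would correlate the slices and Proposition~\ref{Proposition:SeedsPartial} could not be invoked. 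Combining the two contributions yields $O(q/2^{\seedsize}) + O(t^2/2^{\saltsize})$, as claimed.
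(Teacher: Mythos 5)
Your overall strategy --- condition on pairwise distinct salts at cost $O(t^2/2^{\saltsize})$, then swap the pseudo-random objects one oracle slice at a time via Proposition~\ref{Proposition:SeedsPartial}, charging each swap to the queries falling in that slice so that the total seed contribution is $O(q/2^{\seedsize})$ --- is exactly the argument of the paper, which notes that the salt + index suffixes make the slices disjoint and then sums the per-slice query counts $\sum_{ij}(q_{ij}+r_{ij})\le q$. However, your special treatment of the revealed seeds opens a genuine gap. You decree that an object whose seed is revealed ``cannot be re-randomized'' and that the hybrids may only touch the seed-derived quantities that remain secret. But then your chain of hybrids does not terminate at \game 0. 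Consider a round with $c^{ij}=3$: there the prover reveals the master seed, so \emph{every} seed-derived quantity of that round is revealed and your hybrids leave that round completely untouched; yet in \game 0 the same round consists of a uniformly random pair $(\pi^{ij},\yv^{ij})$ sent in the clear, with no seed anywhere. The distance between your terminal hybrid and \game 0 is therefore precisely the quantity the proposition asks you to bound --- the revealed, seed-expanded part of the transcripts, i.e.\ the bulk of the transcript --- and your argument never bounds it.

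The resolution, which the paper's proof adopts implicitly, is that the comparison only makes sense at the level of the \emph{expanded} transcripts (equivalently, the view the verifier reconstructs): taken literally, \game 0 outputs raw permutations and vectors while \game 0' outputs seeds, so the two games are not even syntactically comparable and a format check would distinguish them perfectly. Once both games are read through the expansion map, the seed of every pseudo-random object --- including those that are ``revealed'' in the raw encoding --- is hidden from the distinguisher; only the expansion $E_1(\seed_\pi,\salt,index)$ or $E_2(\seed_\yv,\salt,index)$ appears in its input, and Proposition~\ref{Proposition:SeedsPartial} applies to all of these objects alike, so the hybrid chain really does reach \game 0 with total cost $O(q/2^{\seedsize})$. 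Your worry that ``the adversary learns the seed and can recompute its expansion'' conflates the raw encoding of \game 0' transcripts with the semantic content the proposition is about: interpreted literally it makes the statement vacuous, and interpreted as you use it, it exempts from randomization exactly the objects that still need to be converted, leaving the main part of the distance unaccounted for.
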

\begin{proof}
	The only difference between the $2$ games is the way the strings $\pi^{ij}$ and $\pi^{ij}(\yv^{ij})$ are generated. Notice that for each $i,j$ we use a different function $E$ as long as the salts are all different (more precisely, we use $E(*||\salt^{i},index)$ so we actually use the same function but with a different suffix). Take the permutations $\pi^{ij}$, the adversary can distinguish a pseudo-random $\pi^{ij}$ from a random $\pi^{ij}$ wp. $O(\frac{q^{ij}}{2^{\seedsize}})$ where $q^{ij}$ is the number of queries done to the function $E$ used to construct $\pi^{ij}$. Similarly, let $r^{ij}$ be the number of queries done to the function $E$ used to construct $\pi^{ij}(\yv^{ij})$. Because each of these functions is different, we have $\sum_{ij} q_{ij} + r_{ij} \le q$ where recall that $q$ is the total number of queries done by $\aa$. Moreover, we can use Proposition~\ref{Proposition:SeedsPartial} and therefore have
	$$ \Pr[\aa \mbox{ wins } \game 0] - \Pr[\aa \mbox{ wins } \game 0'] \le \sum_{ij} O(\frac{q_{ij} + r_{ij}}{2^{\seedsize}}) +  O(\frac{t^2}{2^{\saltsize}}) \le O(\frac{q}{2^{\seedsize}}) + O(\frac{t^2}{2^{\saltsize}}).$$
\end{proof}

\section{Reduction of Signature Size using Hash Trees}
	
	A common technique for further reducing the signature size is to utilize a technique based upon seeded hash trees -- small seed expansions associated to Merkel trees. This technique is used, for example, in \cite{Beu20} and promises to have a huge advantage in efficiency. We show here how this technique can reduce the length of the signatures arising from Stern's signature scheme.
	
	\begin{openbox}[\textit{Stern's identification protocol, optimized fusion of $4$ rounds}]
		 \textbf{Initialization.} \textbf{Keygenerator} samples $(\Hm,\ev,\sv = \Hm\ev) \xleftarrow[]{\$} \mathcal{D}_{n,k,w}$ \ref{Conjecture:AvgHardnessBSD}, $\textbf{sk} = \ev$, $\textbf{pk} = (\Hm,\sv)$. \\
		
		 \textbf{Interaction.}  \\
		
		 \textbf{Prover}: picks $\seed_1^{1234},\seed_2^{1234} \in \zo^{\seedsize}$. For $j \in \{1,2\}$: 
		\begin{align*}
			(\seed_j^{12},\seed_j^{34}) & = J(\seed_j^{1234}|| \salt || index) \\
			(\seed_j^{1},\seed_j^2) & = J(\seed_j^{1234}|| \salt || index) \\
			(\seed_j^{1},\seed_j^2) & = J(\seed_j^{1234}|| \salt || index) 
		\end{align*}
		For $i \in \{1,2,3,4\}$, we construct $\pi^i = E(\seed_1^i || \salt || index )$ and  $\pi^i(\yv^i) = E(\seed_1^i || \salt || index )$. Let $\tv^i = \Hm \yv^i$, $\zv^i_1 = (\pi^i,\tv^i)$. He then construct the commitments $\xv^i_j = \hh_{\comm}(z^i_j||\salt||index)$, and constructs 
		\begin{align*}
			\xv^{12}_j & = L((\xv^1_j,\xv^2_j)||\salt||index)) \\
			\xv^{34}_j & = L((\xv^3_j,\xv^4_j)||\salt||index))  \\
			\xv^{1234}_j & = L((\xv^{12}_j,\xv^{34}_j)||\salt||index)) 
		\end{align*} 
		
		\textbf{Prover} calculates random \textit{challenges} as $( c^1,c^2,c^3,c^4) = \hh_{FSH}(\xv^{1234},m).$\\
		
		The prover and the verifier run a $4$ round optimized scheme with the following optimizations:
		\begin{itemize}
			\item For $j \in \{1,2\}$ If the prover needs to reveal $\seed_j^1$ and $\seed_j^2$ (resp. $\seed_j^3$ and $\seed_j^4$) he reveals $\seed_j^{12}$ instead (resp. $\seed_j^{34}$) and the verifier uses $J$ to recover each seed. If the prover needs to reveal $\seed_j^{12}$ and $\seed_j^{34}$ he reveals $\seed_j^{1234}$ instead and uses $J$ again
			\item For $j \in \{1,2,3\}$ If the prover needs to reveal $x_j^1$ and $x_j^2$ (resp. $x_j^3$ and $X_j^4$) he reveals $x_j^{12}$ instead (resp. $x_j^{34}$) and the verifier uses $L$ to recover each commitment. If the prover needs to reveal $x_j^{12}$ and $x_j^{34}$ he reveals $x_j^{1234}$ instead and uses $L$ again.
		\end{itemize}
	\end{openbox}
	
	Stern's identification scheme with hash trees consists of $55$ parallel repetition of the above scheme (which each has $4$ rounds), except that the same salt is used for all the repetition. This gives a total of $220$ rounds. 
	\begin{proposition}
		The signature size of the signature scheme constructed from the above identification scheme is (in bits):
		$$  SIZE = \saltsize + 55\left(2 C_1 + 4\left(\frac{1}{3}s_w + \frac{1}{3}n\log_2(q) + \frac{1}{3} \commsize\right)\right).$$
		where 
		$$ C_1 = \frac{128}{81}\seedsize + \frac{89}{81}\commsize.$$
		and $s_w = \log_2 \left(|\{x \in \F_q^n, wt(x) = w\}\right)$.
	\end{proposition}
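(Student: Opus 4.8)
The plan is to compute the size by splitting each signature into one global piece and $55$ identically distributed per-block pieces, and then to evaluate every per-block quantity as an expectation over the Fiat--Shamir challenges. First I would note that the $\salt$ is drawn once and shared across all $55$ fused blocks, so it contributes exactly $\saltsize$ bits and accounts for the leading term; all remaining data is a sum of $55$ i.i.d.\ block contributions, which explains the shape $\saltsize + 55(\cdots)$. Inside one block the transmitted material separates into three independent parts: (i) the seeds needed to recover the revealed permutations and masking vectors, organised along the two depth-$2$ seed trees; (ii) the commitment hashes the verifier needs to rebuild the Merkle roots; and (iii) the per-round residual data ($o_3$ and the unfoldable commitment). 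Because each of $c^1,\dots,c^4$ is uniform and independent in $\{1,2,3\}$, I would obtain every expected cost by averaging over the $3^4=81$ equally likely challenge tuples, which is the source of every factor $81$ in the denominators.

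For the seed trees I would invoke the following counting fact on the depth-$2$ binary tree over leaves $\{1,2,3,4\}$: the prover sends the minimal set of internal seeds covering exactly the revealed leaves (using $\seed^{12}$ for a revealed sibling pair and the root $\seed^{1234}$ when all four are revealed). In the permutation tree a leaf is revealed iff $c^i\neq 1$, and in the masking-vector tree iff $c^i\neq 2$, so in both cases each leaf is revealed independently with probability $2/3$. A two-pair case analysis---cost $1$ for each sibling pair containing at least one revealed leaf, minus one node saved when all four are revealed---gives expected cost $\tfrac{8}{9}+\tfrac{8}{9}-\tfrac{16}{81}=\tfrac{128}{81}$ seeds per tree. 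The identical template, now with a leaf declared \emph{missing} when its commitment is not recomputable (which happens iff $c^i$ equals that tree's index, hence with probability $1/3$), yields $\tfrac{5}{9}+\tfrac{5}{9}-\tfrac{1}{81}=\tfrac{89}{81}$ hashes for a folded commitment tree. These are exactly the two coefficients in $C_1=\tfrac{128}{81}\seedsize+\tfrac{89}{81}\commsize$, and the factor $2$ in $2C_1$ gathers the two seed trees together with the two folded commitment trees, namely those carrying the hidden $z_1$- and $z_2$-commitments.

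For the residual term I would argue round by round. Conditioned on $c^i=1$ the prover still sends $o_3=\pi^i(\ev)$, a weight-$w$ word costing $s_w$ bits, while its hidden commitment $x_1^i$ is folded into the first commitment tree; conditioned on $c^i=2$ it sends $o_3=\yv^i+\ev\in\F_q^n$, costing $n\log_2 q$ bits, with $x_2^i$ folded into the second tree; conditioned on $c^i=3$ it sends no $o_3$ but must deliver the single commitment $x_3^i$ directly, since $z_3^i=\pi^i(\yv^i+\ev)$ is the only value the verifier cannot recompute without the secret, costing $\commsize$ bits. Averaging over the uniform challenge gives $\tfrac13 s_w+\tfrac13 n\log_2 q+\tfrac13\commsize$ per round; summing over the four rounds produces the second summand, and the $\tfrac13\commsize$ contribution is precisely the third (unfolded) commitment tree's share. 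Adding (i)+(ii)+(iii), multiplying by $55$, and prepending the single $\saltsize$ then assembles the stated formula.

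The step I expect to be the main obstacle is the bookkeeping underlying (i) and (ii): one must check, separately for each of the three challenge values, exactly which seeds and which commitment leaves the verifier can recompute and which must be transmitted, and then verify that the pair-merge and root-merge of the Merkle structure are applied optimally so that the expectations collapse to $\tfrac{128}{81}$ and $\tfrac{89}{81}$ rather than to the naive per-leaf averages. The remaining assembly is routine; the only conceptual subtlety in (iii) is recognising that the $c^i=3$ branch emits a commitment rather than a metric vector, which is exactly why a $\tfrac13\commsize$ term sits beside the two weight-dependent terms.
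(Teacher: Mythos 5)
Your proof is correct and arrives at exactly the stated formula, following the same overall skeleton as the paper (the \salt{} counted once, $55$ i.i.d.\ fused blocks, expectations over the uniform challenges $c^1,\dots,c^4$), but the bookkeeping for the tree coefficients is organized differently. The paper defines $C_1$ as the \emph{joint} cost of the index-$1$ data --- the permutation seeds together with the $z_1$-commitments, whose reveal events are complementary (seed revealed iff $c^i\neq 1$, commitment needed iff $c^i=1$) --- and computes it by enumerating the six joint configurations of revealed seeds and needed commitments over the $81$ challenge tuples, getting $\frac{128}{81}\seedsize+\frac{89}{81}\commsize$ from a single multi-case sum; the factor $2$ then covers $j=1$ and $j=2$. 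You instead price each of the four folded trees \emph{separately} by linearity of expectation --- each sibling pair costs one node iff it contains a revealed (resp.\ needed) leaf, minus one node when the root merge applies --- giving $\frac{8}{9}+\frac{8}{9}-\frac{16}{81}=\frac{128}{81}$ per seed tree and $\frac{5}{9}+\frac{5}{9}-\frac{1}{81}=\frac{89}{81}$ per folded commitment tree, and then regroup one seed tree with one commitment tree to reconstitute $C_1$. The two computations are equivalent, and yours is arguably cleaner: linearity of expectation makes the complementarity the paper leans on unnecessary and avoids the error-prone six-case enumeration; it also makes explicit the asymmetry that the stated formula prices the $j=3$ commitments \emph{unfolded} at $\frac{1}{3}\commsize$ per round (rather than $\frac{89}{81}\commsize$ per block), a point you correctly identify and match. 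Your residual term ($s_w$ for $c^i=1$, $n\log_2 q$ for $c^i=2$, $\commsize$ for $c^i=3$) coincides with the paper's, so the assembled total agrees with the proposition.
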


\begin{proof}
 Let us compute the communication cost in this setting. Let $C_1$ be the communication cost corresponding to $j = 1$. This means we count here the costs of the revealed $\seed^i_1$ as well as $x_1^i$. For each of the $4$ rounds, we have a probability $\frac{2}{3}$ of revealing the seed and probability $\frac{1}{3}$ of revealing the commitment. Moreover, we use our seed and commitment regrouping. For example, if we reveal the $4$ seeds (which happens wp. $(\frac{2}{3})^4)$, we only have to reveal $\seed_1^{1234}$ which costs $\seedsize$ bits. Considering each of these case, we obtain:
	
	\begin{multline*}C_1 = (\frac{2}{3})^4 s_{\seed} + 4(\frac{2}{3})^3\frac{1}{3} \left(2s_{\seed} + s_{\comm}\right) + 4(\frac{2}{3})^2(\frac{1}{3})^{2}\left(2s_{\seed} + 2s_{\comm}\right) \\ + 2(\frac{2}{3})^2(\frac{1}{3})^{2}\left(s_{\seed} + s_{\comm}\right) + 4\frac{2}{3}(\frac{1}{3})^3 \left(s_{\seed} + 2s_{\comm}\right) +  (\frac{1}{3})^4 s_{\comm} 
	\end{multline*}
which gives 
	$$ C_1 = \frac{128}{81}s_{\seed} + \frac{89}{81}s_{\comm}.$$
	The same calculation can be done for $j = 2$. For the final case, we reveal the commitment wp. $\frac{1}{3}$. If this doesn't happen, we reveal $\pi(e)$ or $\pi(y+e)$ which requires respectively $s_w$ and $n\log_2(q)$ bits to send. Finally, we have to send also the salt used (the indexes can be recomputed by the verifier). From there, we have that the total size is 
	$$  SIZE = \saltsize + 55\left(2 C_1 + 4\left(\frac{1}{3}s_w + \frac{1}{3}n\log_2(q) + \frac{1}{3} \commsize\right)\right).$$
\end{proof}
	
\subsection{Proposal with Lee's metric}
We use our scheme with the salt + index construction and with the above hash trees. Wa also consider Lee's metric, defined as follows. For $\ev = (e_1,\dots,e_n) \in F_q^n$, we define
$$ wt_L(\ev)= \sum_{i  = 1}^n \min\{e_i,q-e_i\}.$$
The best known algorithms for solving the syndrome decoding problem with this metric were proposed in~\cite{CDE21}. Plugging these exponents to our construction, we obtain the following signature sizes.

\begin{center}
	\begin{tabular}{ |c||c|c|c|c|c|  }
		\hline
		\multicolumn{6}{|c|}{Non-optimized version} \\
		\hline
		$q$& $2$& $3$& $5$& $7$& $13$\\
		\hline
		$wt_H(\cdot)$& $253.05$ kB & $116.54$ kB & $138.54$ kB & $126.468$ kB & $113.227$ kB\\
		$wt_L(\cdot)$& $253.05$ kB & $116.54$ kB & $95.48$ kB & $90.935$ kB & $79.2733$ kB\\
		\hline
	\end{tabular}
\end{center}

\begin{center}
	\begin{tabular}{ |c||c|c|c|c|c|  }
		\hline
		\multicolumn{6}{|c|}{Optimized version} \\
		\hline
		$q$& $2$& $3$& $5$& $7$& $13$\\
		\hline
		$wt_H(\cdot)$& $26.2105$ kB & $21.8122$ kB & $27.6249$ kB & $28.291$ kB & $29.3769$ kB\\
		$wt_L(\cdot)$& $26.2105$ kB & $21.8122$ kB & $21.4108$ kB & $22.7065$ kB & $23.2905$ kB\\
		\hline
	\end{tabular}
\end{center}

\begin{center}
	\begin{tabular}{ |c||c|c|c|c|c|  }
		\hline
		\multicolumn{6}{|c|}{Hash-tree version (4 levels)} \\
		\hline
		$q$& $2$ & $3$& $5$& $7$& $13$\\
		\hline
		$wt_H(\cdot)$& $24.7223$ kB & $20.304$ kB & $26.1432$ kB & $26.8123$ kB & $27.9032$ kB\\
		$wt_L(\cdot)$& $24.7223$ kB & $20.304$ kB & $19.9007$ kB & $21.2023$ kB & $21.789$ kB\\
		\hline
	\end{tabular}
\end{center}
\section{Conclusion}
In this work, we studied the security and insecurity of some optimized Stern's signature schemes. The main takeaway from our work is that one has to be careful when considering deterministic commitments and seeds of size $\lambda$ but that there is a way to make work with minimal deterioration of the efficiency. Similar analysis could be done for other schemes, and we hope to extend this fully to MQDSS. There doesn't seem to be any special difficulty but each scheme has to be studied with care. Moreover, it would be nice to extend this framework to $5$ round schemes, which is an optimization that we didn't consider in this work. 
\bibliography{paper2}
\bibliographystyle{alpha}
\end{document}